\begin{document}

\newtheorem{theorem}{Theorem}[section]
\newtheorem{lemma}{Lemma}[section]
\newtheorem{proposition}{Proposition}[section]
\newtheorem{corollary}{Corollary}[section]
\newtheorem{remark}{Remark}[section]
\numberwithin{equation}{section}

\newcommand{\bfR}{{\Bbb R}}
\newcommand{\bfC}{{\Bbb C}}
\newcommand{\bfZ}{{\Bbb Z}}
\newcommand{\ii}{\text{i}}
\newcommand{\e}{\text{e}}
\newcommand{\dd}{\text{d}}
\newcommand{\Om}{\omega}
\newcommand{\nn}{\nonumber}
\newcommand\be{\begin{equation}}
\newcommand\ee{\end{equation}}
\newcommand{\bea}{\begin{eqnarray}}
\newcommand{\eea}{\end{eqnarray}}
\newcommand\berr{\begin{eqnarray*}}
\newcommand\eerr{\end{eqnarray*}}

\newcommand{\change}[1]{{\color{blue}#1}}
\newcommand{\note}[1]{{\color{red}#1}}

\title{The Coupled Hirota Equation with a $3\times3$ Lax pair: Painlev\'e-type Asymptotics in Transition Zone}

\author{Xiaodan Zhao}
\address{ School of Control and Computer Engineering, North China Electric Power University, Beijing 102206, P.R. China}
 \email{xiaodan\_zhao0224@163.com (X. Zhao)}

\author{Lei Wang$^*$}
\address{School of Mathematics and Physics, North China Electric Power University, Beijing 102206, P.R. China}
\email{50901924@ncepu.edu.cn (L. Wang)}

\keywords{Coupled Hirota equation; Riemann--Hilbert problem; Asymptotics; Painlev\'e II equation}

\begin{abstract}
We consider the Painlev\'e asymptotics for a solution of integrable coupled Hirota equation with a $3\times3$ Lax pair whose initial data decay rapidly at infinity. Using Riemann--Hilbert techniques and Deift--Zhou nonlinear steepest descent arguments, in a transition zone defined by $|x/t-1/(12\alpha)|t^{2/3}\leq C$, where $C>0$ is a constant, it turns out that the leading-order term to the solution can be expressed in terms of the solution of a coupled Painlev\'e II equation associated with a $3\times3$ matrix Riemann--Hilbert problem.

\end{abstract}
\date{}

\thanks{$^*$Corresponding author.}
\maketitle

\section{Introduction}

The aim of this paper is to probe into the asymptotics as $t\rightarrow\infty$ of a solution to the Cauchy problem for an integrable coupled Hirota equation \cite{PSM-PRE,RLD-JPA,TP-JMP},
\begin{align}
&\left\{
\begin{aligned}
&\ii u_t+\frac{1}{2}u_{xx}+(|u|^2+|v|^2)u+\ii\alpha[u_{xxx}+(6|u|^2+3|v|^2)u_x+3uv^*v_x]=0,\\
&\ii v_t+\frac{1}{2}v_{xx}+(|u|^2+|v|^2)v+\ii\alpha[v_{xxx}+(6|v|^2+3|u|^2)v_x+3vu^*u_x]=0,
\end{aligned}
\right.\label{1.1}\\
&u(x,0)=u_0(x), \ v(x,0)=v_0(x), \ x\in\bfR,\label{1.2}
\end{align}
where $u(x,t)$, $v(x,t)$ are the complex envelop functions, $\alpha>0$ is a small real parameter denoting the strength of higher-order effects, $u_0(x)$ and $v_0(x)$ are sufficiently smooth and rapidly decaying as $|x|\rightarrow\infty$. The terms inside square brackets in Equation \eqref{1.1} account for the third-order dispersion, self-steepening, and delayed nonlinear response effect, respectively. For illustration the transmission procedure when high intensity ultra-short pulses traverse an optical glass fiber \cite{BMP}, they are turned out to be non-negligible in optics. Thus, compared with the simple Manakov system (namely, $\alpha=0$), the coupled Hirota system can be considered to be  a more accurate prototype of the wave evolution in the real world.

Mathematically, thanks to the integrability, a series of important results have been obtained on the coupled Hirota equation. Early in 1992, Tasgal and Potasek shown that Equation \eqref{1.1} can be formulated in terms of an eigenvalue problem, and thus is solvable by the means of inverse scattering transformation \cite{TP-JMP}. In \cite{RLD-JPA}, the bright and dark $N$-soliton solutions have been obtained by using Hirota bilinearization derivable from the Painlev\'e-analysis. In 2006, Huo and Jia established the local well-posedness of the Cauchy problem for the coupled Hirota equation in Sobolev spaces $H^s\times H^s$ ($s\geq1/4$) by the Fourier restriction norm method \cite{HJ-JMAA}. In addition, various rogue wave solutions of the coupled Hirota equation \eqref{1.1}, such as lowest-order fundamental, dark and composite rogue waves as well as dark-bright-rogue waves and rogue-wave pairs   were derived in \cite{C,CS-PRE,WC-PLB}. The interactional solutions between rogue waves and the other nonlinear waves such as breathers and dark-bright solitons of \eqref{1.1} have been reported in \cite{WLC}.
In 2019, Zhang et al. studied the modulational instability, multi-dark soliton and higher-order vector rogue wave   structures of the coupled Hirota equation via the Darboux transformation \cite{ZYW}.

The study on long-time asymptotic behaviors is an important and challengeable topic in integrable system. A powerful tool for analyzing the large-time asymptotics of the integrable nonlinear evolution equations is the nonlinear steepest descent method, which was firstly proposed by Deift and Zhou \cite{DZ1993}. Since then this new method has been widely applied to find numerous new significant asymptotic results in a rigorous form for different nonlinear integrable models \cite{AL-Non,CL-JLMS,HLJL,LG-JDE,WF-CMP,XJF}. In particular, for the long-time asymptotic analysis of solutions to coupled Hirota equation \eqref{1.1}, the initial-boundary value problem on the half-line has been first considered in \cite{LG-SCM} based on the nonlinear steepest descent analysis for an appropriate Riemann--Hilbert (RH) problem characterization. Moreover, for the Schwartz initial values $u(x,0), \ v(x,0)\in\mathcal{S}(\bfR)$, the asymptotic behavior as $t\to\infty$ for Cauchy problem of the coupled Hirota equation \eqref{1.1} was carried out by Liu and the first author \cite{LZ}. It turns out that there are two asymptotic sectors (a slowly decaying of the order $O(t^{-1/2})$, and a sector of rapid decay) with a transition region between the adjacent sectors in $(x,t)$-half-plane, see Figure \ref{fig1}. More precisely, the two sectors are defined as follows: (I) $x/t<1/(12\alpha)-\varepsilon$ (oscillatory sector), (II) $x/t>1/(12\alpha)+\varepsilon$ (sector of rapid decay ), where $\varepsilon>0$. However, as far as we know, how to describe the asymptotics in the transition region, $x/t\approx1/(12\alpha)$, still lies on the table. Thus, the primary target of our paper is to derive the Painlev\'e-type asymptotics of the solution to the Cauchy problem \eqref{1.1}-\eqref{1.2} in the transition region defined by
\begin{align}\label{1.3}
\mathcal{T}\doteq\left\{(x,t)\in\bfR\times\bfR^+:
\left|\frac{x}{t}-\frac{1}{12\alpha}\right|t^{\frac{2}{3}}\leq C\right\},
\end{align}
where $C>0$ is a constant.
\begin{figure}[htbp]
  \centering
  \includegraphics[width=4in]{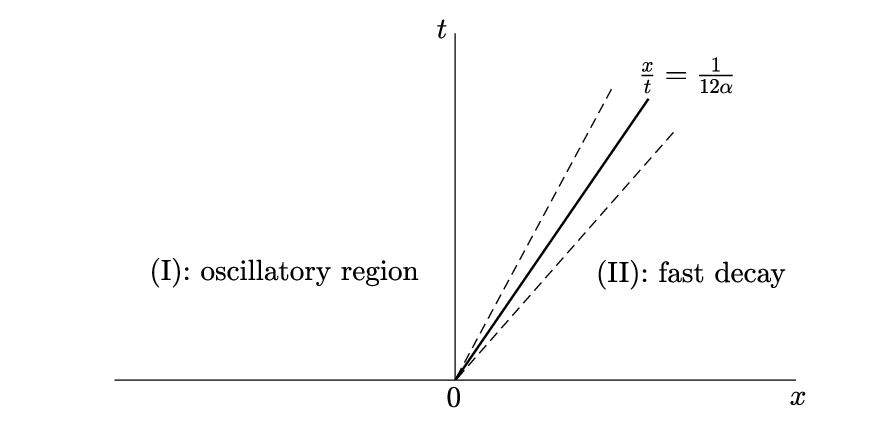}
  \caption{The different sectors of the $(x,t)$-half-plane.}\label{fig1}
\end{figure}

In the literature, we note that the Painlev\'e transcendents and their higher order analogues are crucial in asymptotic analysis of many integrable nonlinear partial differential equations. The Painlev\'e asymptotics was first appeared in the case of the Korteweg--de Vries (KdV) equation by Segur and Ablowitz \cite{SA-PD} and the modified KdV equation by Deift and Zhou \cite{DZ1993} both in the self-similar sector $|x|\leq Ct^{1/3}$. In 2010, Boutet de Monvel, Its, and Shepelsky computed the long-time asymptotics of the solution of Camassa--Holm equation in two transition regions and shown that in both regions the asymptotics is expressed in terms of second Painlev\'e transcendents \cite{AID}. Moreover, the Painlev\'e-type asymptotics was also discovered for an extended modified KdV equation and defocusing Hirota in transition sector \cite{LG-JDE,XJF}. The higher order Airy and Painlev\'e asymptotics for the modified KdV equation and modified KdV hierarchy were considered in \cite{CL-JLMS} and \cite{HZ-SIAM}, respectively. Interestingly, the Painlev\'e III hierarchy has arisen in study of the fundamental rogue wave solutions of the focusing nonlinear Schr\"odinger equation in the limit of large order \cite{BLP}. It is noted that for the solutions of Sasa--Satsuma equation and matrix mKdV equation associated with higher order matrix spectral problems, there also exist Painlev\'e-type asymptotics \cite{HLJL,LZG}. Recently, considering the defocusing NLS equation with a nonzero background, Wang and Fan found the Painlev\'e asymptotics in two transition regions \cite{WF-CMP}. The main result is now stated as follows.

\begin{theorem}\label{Main-Theorem}
Let $u_0(x)$ and $v_0(x)$ be two functions in the Schwartz class and let $\varrho(k)=\begin{pmatrix}\varrho_1(k)& \varrho_2(k)\end{pmatrix}$ be the reflection coefficient defined by \eqref{2.6}. Assume that the $(1,1)$ entry $s_{11}(k)$ of $s(k)$ described in \eqref{2.4} is nonzero for Im$k\leq0$. Then, for any constant $C>0$, the solution of the Cauchy problem \eqref{1.1}-\eqref{1.2} to the coupled Hirota equation satisfies the following long-time asymptotic formula for $(x,t)\in\mathcal{T}$, i.e., $|x/t-1/(12\alpha)|t^{2/3}\leq C$,
\begin{align}\label{1.4}
u(x,t)=\frac{2\e^{\ii\psi_1(y,t)}}{(3\alpha t)^{\frac{1}{3}}}u_p(y)+O(t^{-\frac{2}{3}}),\quad
v(x,t)=\frac{2\e^{\ii\psi_2(y,t)}}{(3\alpha t)^{\frac{1}{3}}}v_p(y)+O(t^{-\frac{2}{3}}),
\end{align}
where
\begin{align}
y=&(3\alpha)^{-\frac{1}{3}}\left(\frac{x}{t}-\frac{1}{12\alpha}\right)t^{\frac{2}{3}},\nn\\
\psi_j(y,t)=&\arg \varrho_j\left(-\frac{2}{3}\right)+\frac{1}{216\alpha^2}t+
\frac{1}{2}(3\alpha)^{-\frac{2}{3}}t^{\frac{1}{3}}y,\quad j=1,2,\nn
\end{align}
and $u_p(y)$, $v_p(y)$ denote the smooth solution of the coupled Painlev\'e
II equation \eqref{A.5} corresponding to $s_1=|\varrho_1(-2/3)|$, $s_2=|\varrho_2(-2/3)|$ according to Theorem \ref{lemmaA.1}.
\end{theorem}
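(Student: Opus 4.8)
The plan is to carry out a Deift--Zhou nonlinear steepest descent analysis of the $3\times3$ Riemann--Hilbert problem that encodes the solution $(u,v)$, tracking every estimate uniformly in the transition zone $\mathcal{T}$. I would begin from the basic RH problem for a matrix $M(x,t;k)$ whose jump across $\bfR$ is built from the reflection coefficient $\varrho(k)$ and the oscillatory factor $e^{2\ii t\theta(k)}$, where the phase $\theta(k)=\theta(k;x,t)$ is the cubic dictated by the third-order dispersion; the solution $u,v$ is then recovered from the $k\to\infty$ expansion of $M$. The hypothesis that $s_{11}(k)\neq0$ for $\mathrm{Im}\,k\le0$ guarantees that there are no residue (soliton) conditions, so the problem is purely dispersive. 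The first task is to analyze the signature table of $\mathrm{Re}(2\ii t\theta)$: the stationary points solve $\theta'(k)=0$, and precisely on the line $x/t=1/(12\alpha)$ the two stationary points coalesce into a single degenerate cubic saddle at a critical point $k_0$, whose reflection-coefficient value is recorded in the statement as $\varrho_j(-2/3)$. Throughout $\mathcal T$ the two saddles remain within $O(t^{-1/3})$ of $k_0$, which is exactly the regime that produces Painlev\'e rather than Airy behaviour.

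Next I would perform the standard chain of transformations $M\mapsto M^{(1)}\mapsto M^{(2)}$. A diagonal conjugation (solving an auxiliary $\delta$-type problem) regularizes the diagonal part of the jump on the portion of $\bfR$ lying between the saddles; then, after splitting $\varrho$ into an analytic part plus a rapidly decaying remainder (equivalently, via a $\bar\partial$-extension, since the data are only Schwartz rather than analytic), I open lenses along the steepest-descent contours emanating from $k_0$ so that the deformed jumps become exponentially close to the identity everywhere except on small crosses centred at the coalescing saddles. Because the saddles are confined to an $O(t^{-1/3})$ neighbourhood of $k_0$, a single local parametrix around $k_0$ suffices. Introducing the rescaled variable $z=(3\alpha t)^{1/3}(k-k_0)$, the rescaled phase takes the self-similar cubic form whose linear coefficient is exactly $y=(3\alpha)^{-1/3}(x/t-1/(12\alpha))t^{2/3}$, and the rescaled jump data converge to those of the coupled Painlev\'e II model RH problem of Theorem \ref{lemmaA.1} with parameters $s_1=|\varrho_1(-2/3)|$, $s_2=|\varrho_2(-2/3)|$; this identifies the local parametrix with the coupled Painlev\'e II solution. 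The global (outer) parametrix is the identity up to lower-order terms, since there are no poles and the outer jump decays.

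Finally, I set $R=M\,(P^{\mathrm{out}}P^{\mathrm{loc}})^{-1}$ and show that it solves a small-norm RH problem whose jump equals $I+O(t^{-1/3})$ on the matching circle and is exponentially small elsewhere, uniformly for $(x,t)\in\mathcal T$; standard small-norm estimates then give $R=I+O(t^{-1/3})$, with the explicit $O(t^{-2/3})$ improvement following once the first correction term is extracted. Tracing the $k\to\infty$ expansion back through the transformations, the leading term of $u,v$ is produced by the $1/z$ coefficient of the Painlev\'e II parametrix, which yields the amplitudes $2(3\alpha t)^{-1/3}u_p(y)$ and $2(3\alpha t)^{-1/3}v_p(y)$; the explicit oscillatory phases $\psi_j(y,t)$ assemble from the constant $2t\theta(k_0)$ (contributing $\tfrac{1}{216\alpha^2}t$), the residual linear term from the conjugation and recentring (contributing $\tfrac12(3\alpha)^{-2/3}t^{1/3}y$), and $\arg\varrho_j$ at the critical point, giving \eqref{1.4}.

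I expect the main obstacle to be the construction and matching of the local parametrix. Unlike the scalar or $2\times2$ Airy case, the degenerate saddle here must be resolved by the $3\times3$ coupled Painlev\'e II model, so I must verify that the rescaled jumps converge to the model jumps with the correct triangular structure and parameters $s_1,s_2$, that the model solution exists and is smooth for all real $y$ (via Theorem \ref{lemmaA.1} together with the stated non-vanishing of $s_{11}$), and that the parametrix matches the outer solution on the boundary circle to order $O(t^{-1/3})$ uniformly in $y$ for $|y|\le C'$. Controlling this matching uniformly across the entire transition zone --- in particular ensuring that the error remains $O(t^{-2/3})$ as the two saddles begin to separate toward the edges of $\mathcal T$ --- is the delicate point of the argument.
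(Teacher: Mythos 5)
Your overall strategy coincides with the paper's: analytic splitting of $\varrho$, lens opening, a rescaling $z\propto(k+\tfrac23)t^{1/3}$ producing the self-similar phase $2\ii(yz+\tfrac43z^3)$, a local parametrix built from the coupled Painlev\'e II RH problem with $s_j=|\varrho_j(-2/3)|$, a gauge conjugation to strip off $\arg\varrho_j(-2/3)$, and a small-norm argument with the $t^{-1/3}$ coefficient extracted from $M_1^L$. Two points where your plan diverges deserve attention. First, the paper uses \emph{no} $\delta$-type diagonal conjugation on the band between the saddles. Since $|k_2-k_1|=O(t^{-1/3})$, that band is entirely swallowed by the shrinking local disk, and the paper simply keeps the lower-upper factorized jump $J^{(1)}_3$ there; the model problem of Appendix \ref{secb} is designed with the extra segment $X_3$ carrying exactly that product jump \eqref{B.3}. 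If you instead conjugate by a $\delta$ supported on $[k_1,k_2]$, then in the rescaled variable $z$ the factor $\delta$ does not tend to $1$ near the saddles (it behaves like $((z-z_2)/(z-z_1))^{\ii\nu}$), so the rescaled jumps would no longer converge to the jumps of Theorems \ref{lemmaA.1}--\ref{lemmaB.1}; you would be forced to build a $\delta$-dressed model instead. This step as written would derail the identification of the parametrix and should be dropped. Second, ``a single local parametrix suffices'' glosses over the fact that the geometry changes across $x/t=1/(12\alpha)$: for $x/t\geq 1/(12\alpha)$ the saddles are complex and the model lives on the cross $L$ (Appendix \ref{secA}), while for $x/t\leq 1/(12\alpha)$ they are real and the model lives on the contour $X$ with the interval $X_3$ between them (Appendix \ref{secb}); the paper treats $\mathcal{T}_\geq$ and $\mathcal{T}_\leq$ separately with these two distinct model problems, and the uniformity you worry about at the edges of $\mathcal{T}$ is exactly what the parameter set $\mathbb{P}$ in \eqref{B.4} controls.

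One smaller correction: the solvability of the local model does not come from the hypothesis $s_{11}(k)\neq0$ (that only excludes poles of the original RH problem). It comes from Zhou's vanishing lemma applied to the model jump, which requires the Schwarz symmetry $J^L(y;z)=[J^L(y;z^*)]^\dagger$, i.e.\ that $s_1,s_2$ be \emph{real}. This is precisely why the paper inserts the conjugation by $\e^{\ii\phi_j\mathcal{A}/2}$, $\e^{\ii\phi_j\mathcal{B}/2}$ in \eqref{3.11} and \eqref{3.38} before invoking the model: the complex entries $\varrho_j(-2/3)$ must first be rotated to their moduli. Your proposal mentions the phase conjugation but misattributes the source of solvability; in a $3\times3$ setting this is the key new ingredient compared with the scalar Hirota case, so it should be stated correctly.
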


\begin{remark}
Theorem \ref{Main-Theorem} extends the result of the Painlev\'e asymptotics for the defocusing Hirota equation with $2\times2$ matrix Lax pair in \cite{XJF} to the coupled Hirota equation \eqref{1.1} associated with $3\times3$ matrix spectral problem. In the case of the Hirota equation considered in \cite{XJF}, the corresponding final approximation RH problem can be solved directly by the solution of the standard homogeneous Painlev\'e II equation. However, the key ingredient of our analysis consists of proposing a new $3\times3$ matrix model RH problem and establishing the connection with the Painlev\'e II function (Theorem \ref{lemmaA.1}), which are the main differences between our proof and that in \cite{XJF}. Moreover, in order to ensure the solvability of model RH problem, we need to assume that the constants $s_1$ and $s_2$ are real in the jump matrices \eqref{A.1}. Nevertheless, the entries at relevant positions of the jump matrix in our final $3\times3$ matrix approximation RH problem are complex. With the aid of a suitable gauge transformation, the solution of approximation problem is expressed in terms of the solution of new model RH problem.
\end{remark}

The structure of this paper can now be explained. In Section \ref{sec2}, we quickly review the presentation of a basic RH formalism $M(x,t;k)$ related to the Cauchy problem \eqref{1.1}-\eqref{1.2} for the coupled Hirota equation. For more details, see \cite{LZ}. The asymptotics in the transition region will be discussed in Section \ref{sec3}. A substantial part of the work, namely, the constructions of appropriate local models, is deferred to the two Appendices \ref{secA} and \ref{secb}.

\section{The basic Riemann--Hilbert problem}\label{sec2}

The system \eqref{1.1} is integrable and admits the following $3\times3$  Lax pair representation \cite{WLC}
\begin{equation}\label{2.1}
\left\{
\begin{aligned}
&\psi_x(x,t;k)=\frac{\ii}{12\alpha}k\Sigma\psi(x,t;k)+U(x,t)\psi(x,t;k),\\
&\psi_t(x,t;k)=\frac{\ii}{192\alpha^2}(k^3+2k^2)\Sigma\psi(x,t;k)+V(x,t;k)\psi(x,t;k),\\
\end{aligned}
\right.
\end{equation}
where $\psi(x,t;k)$ is a $3\times3$ matrix-valued spectral function, $k$ is a complex iso-spectral parameter, $\Sigma$, $U(x,t)$ and $V(x,t;k)$ are given by
\begin{align}
\Sigma=&\begin{pmatrix}
-2 & 0 & 0\\
0 & 1 & 0\\
0 & 0 & 1\end{pmatrix},\quad U=\begin{pmatrix}
0 & -u & -v\\
u^* & 0 & 0\\
v^* & 0 & 0
\end{pmatrix},\\
V=&\frac{k^2}{16\alpha}U+k\left(\frac{\ii}{4}\Sigma_1(U^2-U_x)+\frac{1}{8\alpha}U\right)\\
&+\frac{\ii}{2}\Sigma_1(U^2-U_x)+\alpha\left([U_x,U]-U_{xx}+2U^3\right),\quad \Sigma_1=\text{diag}(-1,1,1).\nn
\end{align}

Given $u_0(x),\ v_0(x)\in\mathcal{S}(\bfR)$, we can define the scattering matrix $s(k)$ by
\be\label{2.4}
s(k)=\mathbb{I}_{3\times3}-\int_{-\infty}^{+\infty}\e^{-\frac{\ii}{12\alpha}k x\Sigma}[U(x,0)X(x;k)]\e^{\frac{\ii}{12\alpha}k x\Sigma}\dd x,\quad k\in\bfR,
\ee
where the $3\times3$ matrix-valued function $X(x;k)$ is the unique solution of the Volterra integral equation
\be
X(x;k)=\mathbb{I}_{3\times3}-\int^{+\infty}_x\e^{\frac{\ii}{12\alpha}k(x-y)
\Sigma}[U(y,0)X(y;k)]\e^{-\frac{\ii}{12\alpha}k(x-y)\Sigma}\dd y,\quad x\in\bfR,\ k\in\bfR.
\ee
Then the ``reflection coefficient" is defined by
\be\label{2.6}
\varrho(k)\doteq\begin{pmatrix}\varrho_1(k)& \varrho_2(k)\end{pmatrix}
=\begin{pmatrix}\frac{s_{12}(k)}{s_{11}(k)}&\frac{s_{13}(k)}{s_{11}(k)}\end{pmatrix},
\quad k\in\bfR.
\ee
It can be shown that the $(1,1)$ entry $s_{11}(k)$ of $s(k)$ is  analytic in the lower half-plane. Possible zeros of $s_{11}(k)$ give rise to poles in the RH problem. For simplicity, we assume that there is no such pole (solitonless case). Then the main RH problem associated with the Cauchy problem \eqref{1.1}-\eqref{1.2} is as follows \cite{LZ}.

\begin{theorem}\label{th2.1}
Define the $3\times3$ matrix-valued function $J(x,t;k)$ by
\be
J(x,t;k)=\begin{pmatrix}
1+\varrho(k)\varrho^\dag(k) & \varrho(k)\e^{-t\Phi(k)}\\
\varrho^\dag(k)\e^{t\Phi(k)} & \mathbb{I}_{2\times2}
\end{pmatrix},
\ee
where ``$\dagger$" denotes the a matrix transpose conjugate and
\be\label{2.8}
\Phi(k)=\frac{\ii}{4\alpha}\left[\frac{x}{t}k+\frac{1}{16\alpha}(k^3+2k^2)\right].
\ee
Then the following $3\times3$ matrix RH problem:\\
$\bullet$ Analyticity: $M(x,t;k)$ is analytic for $k\in\bfC\setminus\bfR$ and is continuous for $k\in\bfR$;\\
$\bullet$ Jump relation: the boundary values  $M_{\pm}(x,t;k)=\lim_{\varepsilon\to0}M(x,t;k\pm\ii\varepsilon)$  satisfy the jump condition $M_+(x,t;k)=M_-(x,t;k)J(x,t;k)$ for $k\in\bfR$;\\
$\bullet$ Normalization: $M(x,t;k)\to\mathbb{I}_{3\times3}$ as $k\rightarrow\infty$;\\
has a unique solution for each $(x,t)\in\bfR\times\bfR^+$. Moreover, the functions $u(x,t)$ and $v(x,t)$ defined by
\be\label{2.9}
u(x,t)=-\frac{\ii}{4\alpha}\lim_{k\rightarrow\infty}(kM(x,t;k))_{12},\quad
v(x,t)=-\frac{\ii}{4\alpha}\lim_{k\rightarrow\infty}(kM(x,t;k))_{13}
\ee
is a smooth function with rapid decay as $|x|\rightarrow\infty$ which satisfies the coupled Hirota equation \eqref{1.1} for $(x,t)\in\bfR\times\bfR^+$. Furthermore, $u(x,0)=u_0(x)$, $v(x,0)=v_0(x)$.
\end{theorem}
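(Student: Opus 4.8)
The plan is to prove this by the standard inverse-scattering route for the $3\times3$ Lax pair \eqref{2.1}, in four stages: solvability of the RH problem, smoothness and the large-$k$ expansion of $M$, reconstruction of the Lax pair (hence the coupled Hirota equation), and the matching of the initial data. I begin with solvability, which rests on the algebraic structure of the jump. On $\bfR$ the function $\Phi(k)$ in \eqref{2.8} is purely imaginary, so $\overline{t\Phi}=-t\Phi$, and one checks the explicit factorization $J=BB^{\dag}$ with $B=\begin{pmatrix}1&\varrho\e^{-t\Phi}\\0&\mathbb{I}_{2\times2}\end{pmatrix}$. Hence $J$ is Hermitian positive definite with $\det J=|\det B|^2\equiv1$, and on the real axis it carries the symmetry $J(k)=J(k)^{\dag}$ (together with the reduction under $k\mapsto\bar k$ coming from the anti-Hermitian structure $U^{\dag}=-U$). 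Uniqueness is then immediate: since $\det M$ has no jump, is entire, and tends to $1$, we get $\det M\equiv1$, so any solution is invertible; for two solutions $M_1,M_2$ the ratio $M_1M_2^{-1}$ has cancelling jumps, is entire, and tends to $\mathbb{I}_{3\times3}$, whence $M_1\equiv M_2$ by Liouville.

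For existence I would invoke the vanishing lemma together with Zhou's theory that absence of nontrivial homogeneous solutions yields solvability via a zero-index Fredholm argument. Concretely, for a solution $M^{(0)}$ of the homogeneous problem decaying at infinity, the product of $M^{(0)}_+(k)$ with $\bigl(M^{(0)}_-(\bar k)\bigr)^{\dag}$ is analytic and $O(k^{-2})$ in the upper half-plane, so Cauchy's theorem gives $\int_{\bfR}M^{(0)}_-(k)\,J(k)\,M^{(0)}_-(k)^{\dag}\,\dd k=0$; taking the trace, positive-definiteness of $J$ forces the nonnegative continuous integrand to vanish identically, hence $M^{(0)}\equiv0$. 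Smooth dependence of $M$ on $(x,t)$, and the expansion $M=\mathbb{I}_{3\times3}+M_1/k+O(k^{-2})$ as $k\to\infty$, then follow from the smooth and Schwartz-class dependence of the jump on its parameters (via the Cauchy/Plemelj representation of the solution and the rapid decay of $\varrho$).

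The heart of the proof is the dressing step. I would introduce the diagonal phase $\Theta(x,t;k)=\frac{\ii}{12\alpha}kx\Sigma+\frac{\ii}{192\alpha^2}(k^3+2k^2)t\Sigma$, whose $x$- and $t$-derivatives are exactly the free parts of \eqref{2.1}, and set $\Psi=M\e^{\Theta}$. A direct computation using $\Theta_{11}-\Theta_{22}=-t\Phi$ shows that $\Psi$ satisfies the \emph{constant} jump $\Psi_+=\Psi_-J_0$ with $J_0=\begin{pmatrix}1+\varrho\varrho^{\dag}&\varrho\\\varrho^{\dag}&\mathbb{I}_{2\times2}\end{pmatrix}$. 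Consequently $\Psi_x\Psi^{-1}$ and $\Psi_t\Psi^{-1}$ have no jump across $\bfR$ and are entire; since $\Psi_x\Psi^{-1}=M_xM^{-1}+\frac{\ii}{12\alpha}k\,M\Sigma M^{-1}$ grows at most linearly in $k$ (and the $t$-logarithmic derivative at most cubically), each is a polynomial in $k$. Matching the coefficients in the expansion recovers the $x$- and $t$-parts of the Lax pair with $U=\frac{\ii}{12\alpha}[M_1,\Sigma]$; in particular $(1,2)$ and $(1,3)$ entries give precisely the reconstruction formulas \eqref{2.9}, while the anti-Hermitian symmetry of $U$ is inherited from the symmetry of the RH problem, confirming that $u,v$ are genuine potentials. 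The compatibility $\Psi_{xt}=\Psi_{tx}$ is the zero-curvature condition, which reduces to the coupled Hirota system \eqref{1.1}.

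It remains to pin down the initial data and decay. At $t=0$ the jump is built from the very reflection coefficient $\varrho$ that was extracted from $(u_0,v_0)$ through \eqref{2.4}--\eqref{2.6}; by uniqueness of the RH problem, $M(x,0;k)$ coincides with the sectionally analytic matrix assembled from the Jost solutions of the $x$-equation with potential $U(x,0)$ determined by $u_0,v_0$, so the reconstruction \eqref{2.9} returns $u(x,0)=u_0(x)$, $v(x,0)=v_0(x)$, giving \eqref{1.2}. Rapid decay of $u,v$ as $|x|\to\infty$ follows from the Schwartz-class smoothness and decay of $\varrho$ transported through the solution formula. I expect the two genuinely delicate points to be the vanishing lemma/solvability in the $3\times3$ setting and the bookkeeping of the symmetry reductions during reconstruction; the positive-definite factorization $J=BB^{\dag}$ with $\det J\equiv1$ is what makes both manageable.
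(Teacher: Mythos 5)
Your proposal is correct and follows exactly the standard route that the paper relies on: the paper itself gives no proof of Theorem \ref{th2.1}, simply quoting it from \cite{LZ}, and the argument there is the one you outline --- the Hermitian factorization $J=BB^{\dagger}$ with $\det J\equiv1$ plus Zhou's vanishing lemma for unique solvability, followed by the dressing method (constant-jump conjugation by $\e^{\Theta}$, Liouville, and polynomial matching) to recover the Lax pair \eqref{2.1} and the reconstruction formulas \eqref{2.9}. Your checks of the key identities (e.g.\ $\Theta_{11}-\Theta_{22}=-t\Phi$ and $U=\frac{\ii}{12\alpha}[M_1,\Sigma]$ giving the $\frac{\ii}{4\alpha}$ prefactor) are consistent with the stated formulas, so no gap to report.
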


\begin{figure}[htbp]
\centering
    \subfigure[ ]
    {\label{fig2-1}
        \begin{minipage}[t]{0.5\textwidth}
           \centering
            \includegraphics[width=1.0\textwidth]{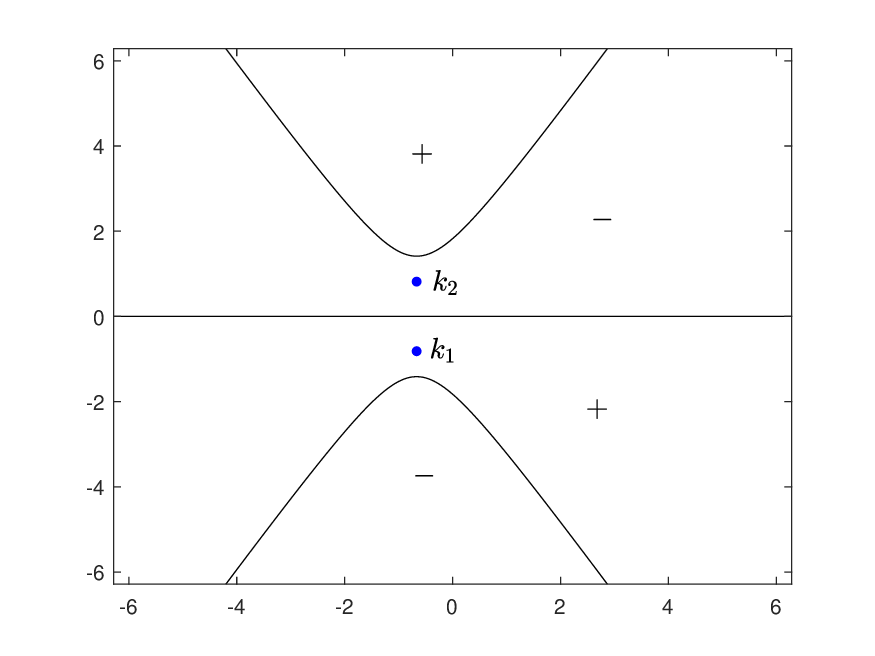}
        \end{minipage}
    }
    \subfigure[ ]
    {\label{fig2-2}
        \begin{minipage}[t]{0.5\textwidth}
        \centering
            \includegraphics[width=1.0\textwidth]{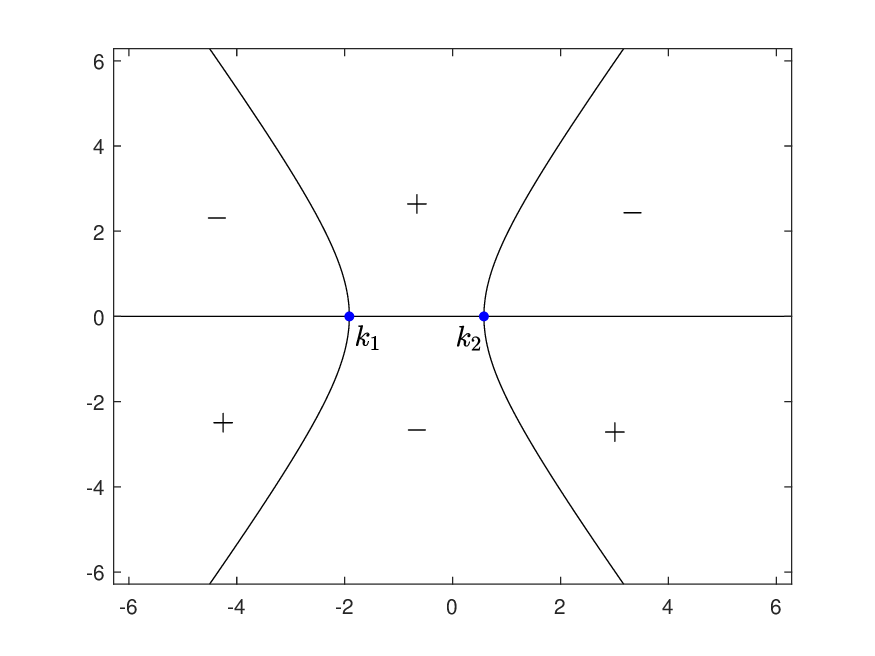}
        \end{minipage}
    }
\caption{Signature tables for $0<(\frac{x}{t}-\frac{1}{12\alpha})t^{\frac{2}{3}}<C$ and $-C<(\frac{x}{t}-\frac{1}{12\alpha})t^{\frac{2}{3}}<0$. The signs $\pm$ correspond to domains with $\pm$Re$\Phi(k)>0$.}\label{fig2}
\end{figure}

The nonlinear steepest descent approach for oscillatory RH problems consists in a battery of
 transformations of the problem, so as to arrive at a model RH problem that can be solved explicitly. The considerable contour transformation is  determined by the signature table relative to the region (in the $(x,t)$-half-plane) of interest, i.e., the distribution of the signs of Re$\Phi(k)$. For the transition zone, the signature tables are shown in Figure \ref{fig2}, where
\begin{align}
k_1&=\frac{2}{3}\left(-1-\sqrt{1-12\alpha\frac{x}{t}}\right),\label{2.10}\\
k_2&=\frac{2}{3}\left(-1+\sqrt{1-12\alpha\frac{x}{t}}\right)\label{2.11}.
\end{align}

\section{Asymptotics in transition zone: $\left|\frac{x}{t}-\frac{1}{12\alpha}\right|t^{\frac{2}{3}}<C$}\label{sec3}

In this section, our goal is to find the asymptotics of solution to Cauchy problem \eqref{1.1}-\eqref{1.2} in the transition region $\mathcal{T}$ defined by \eqref{1.3}. We will derive the asymptotics in the two halves of Sector $\mathcal{T}$ corresponding to $x/t\geq1/(12\alpha)$ and $x/t\leq1/(12\alpha)$ separately; we thus will use the notation
\be
\mathcal{T}_{\geq}\doteq\mathcal{T}\cap\left\{\frac{x}{t}\geq\frac{1}{12\alpha}\right\},\quad
\mathcal{T}_{\leq}\doteq\mathcal{T}\cap\left\{\frac{x}{t}\leq\frac{1}{12\alpha}\right\}.
\ee

\subsection{Asymptotics in Sector $\mathcal{T}_{\geq}$}\label{sub3.1}

Suppose $(x,t)\in\mathcal{T}_{\geq}$, which corresponds to the case in Figure \ref{fig2-1}. In this region, two critical points $k_1$ and $-k_2$ defined in \eqref{2.10} and \eqref{2.11} are complex and approach to $-2/3$ at least the speed
of $t^{-1/3}$ as $t\rightarrow\infty$.

\subsubsection{Modification to the basic RH problem}
We will make the transformation to the basic RH problem in such a way that the new jump matrix approaches the identity matrix as $t\to\infty$.

Let $\Gamma\subset\bfC$ denote the contour $\Gamma=\bfR\cup\{-\frac{2}{3}+\e^{\frac{\pi\ii}{6}}\bfR\}\cup\{-\frac{2}{3}+\e^{-\frac{\pi\ii}{6}}\bfR\}$ oriented  to the right and let $\Omega$ and $\Omega^*$ denote the triangular domains shown in Figure \ref{fig3}. We then decompose $\varrho=\begin{pmatrix}\varrho_1& \varrho_2\end{pmatrix}$ into an analytic part $\varrho_a$ and a small remainder $\varrho_r$.

\begin{figure}[htbp]
\centering
\includegraphics[width=4in]{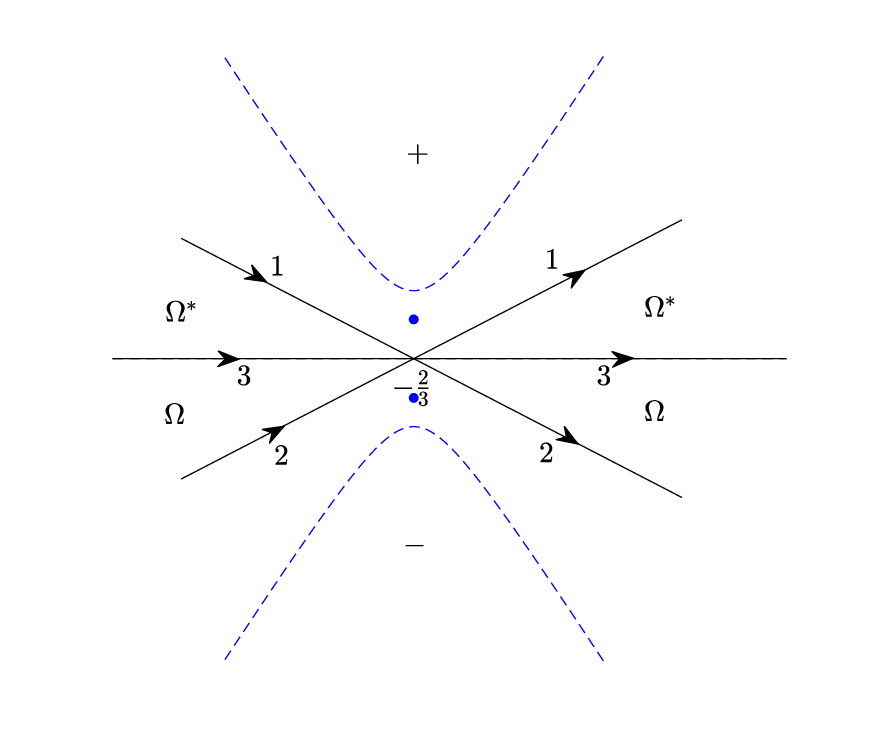}
\caption{The oriented contour $\Gamma$ and sets $\Omega$, $\Omega^*$ .}\label{fig3}
\end{figure}
\begin{lemma}(Analytic approximation)\label{lemma3.1}
There exists a decomposition
\be
\varrho_1(k)=\varrho_{1,a}(t;k)+\varrho_{1,r}(t;k), \quad k\in\bfR,
\ee
where the functions $\varrho_{1,a}$ and $\varrho_{1,r}$ satisfy the following properties:\\
(i) For each $(x,t)\in\mathcal{T}_\geq$, $\varrho_{1,a}(t;k)$ is defined and continuous for $k\in\bar{\Omega}$ and analytic for $k\in\Omega$.\\
(ii) The function $\varrho_{1,a}$ satisfies
\be\label{3.3}
\left\{
\begin{aligned}
&|\varrho_{1,a}(t;k)|\leq \frac{C}{1+|k|^2}\e^{\frac{t}{4}|\text{Re}\Phi(k)|},\\
&\left|\varrho_{1,a}(t;k)-\varrho_1(-\frac{2}{3})\right|\leq C\left|k+\frac{2}{3}\right|\e^{\frac{t}{4}|\text{Re}\Phi(k)|},
\end{aligned}
\right.\quad k\in\bar{\Omega}.
\ee
(iii) The $L^1, L^2$ and $L^\infty$ norms of the function $\varrho_{1,r}(t;\cdot)$ on $\bfR$ are $O(t^{-3/2})$ as $t\rightarrow\infty$.
\end{lemma}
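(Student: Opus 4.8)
The plan is to construct the decomposition $\varrho_1=\varrho_{1,a}+\varrho_{1,r}$ by a Taylor-type subtraction at the stationary point $k=-2/3$ combined with a rational approximation that isolates the behaviour of $\varrho_1$ near that point. First I would exploit the hypothesis that $u_0,v_0$ lie in the Schwartz class, so that $\varrho_1\in C^\infty(\bfR)$ and all its derivatives decay faster than any polynomial. Fix a large integer $N$ (say $N=4$ or so, chosen to produce the required $t^{-3/2}$ rate). The idea is to replace $\varrho_1$ on the real line by a finite sum that matches $\varrho_1$ and several of its derivatives at $k=-2/3$ while remaining analytic and appropriately decaying in the sector $\Omega$. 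Concretely I would write

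\begin{equation}\label{3.decomp}
\varrho_1(k)=\sum_{j=0}^{N}\frac{\varrho_1^{(j)}(-\tfrac{2}{3})}{j!}\Bigl(k+\tfrac{2}{3}\Bigr)^{j}\frac{1}{(1+(k+\tfrac{2}{3})^2)^{N}}+h(k),
\end{equation}
where the first term, call it $\varrho_{1,a}$ after attaching the regularizing rational factor, is manifestly analytic in the whole plane except at the poles $k=-\tfrac23\pm\ii$ (which lie off $\bar\Omega$), and the remainder $h$ vanishes to order $N+1$ at $k=-2/3$ and decays rapidly. The rational damping factor $(1+(k+2/3)^2)^{-N}$ is what will supply the $(1+|k|^2)^{-1}$ decay demanded in \eqref{3.3} and make the construction integrable.

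Next I would verify the three claimed properties in turn. Property (i) is immediate from analyticity of the chosen $\varrho_{1,a}$ in $\Omega$ together with continuity up to $\bar\Omega$. For property (ii), the first bound in \eqref{3.3} follows because in $\bar\Omega$ one has $\mathrm{Re}\,\Phi(k)\le 0$ along the relevant rays, so the factor $\e^{\frac{t}{4}|\mathrm{Re}\Phi(k)|}$ dominates any polynomially bounded quantity and the rational factor furnishes the $(1+|k|^2)^{-1}$ decay; the second bound is the statement that $\varrho_{1,a}(t;k)-\varrho_1(-2/3)$ vanishes linearly at $k=-2/3$, which holds by construction since the $j=0$ term equals $\varrho_1(-2/3)$ at that point and all corrections carry a factor $(k+2/3)$. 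The genuinely quantitative part is property (iii): here I would estimate $\varrho_{1,r}$ on $\bfR$ by combining the rapid decay of $h$ (from the Schwartz property and the order of vanishing at $-2/3$) with the oscillatory/exponential control that the transition scaling provides. In the transition zone one has $|x/t-1/(12\alpha)|t^{2/3}\le C$, so the two critical points coalesce at the scale $t^{-1/3}$; rescaling $k+2/3=(3\alpha t)^{-1/3}z$ converts the $(k+2/3)^{N+1}$ vanishing of $h$ into a factor $t^{-(N+1)/3}$, and an $N$ large enough (again $N\ge 4$ suffices) yields the asserted $O(t^{-3/2})$ simultaneously in $L^1$, $L^2$, and $L^\infty$.

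The main obstacle I anticipate is the $L^1$/$L^2$/$L^\infty$ estimate of property (iii) done \emph{uniformly} across the transition zone, rather than its pointwise analogue. The difficulty is that the exponent $\mathrm{Re}\,\Phi$ depends on $x/t$ and the two saddles $k_1,-k_2$ are merging, so naive stationary-phase bounds degenerate; one must track the dependence of the damping factor $\e^{\frac{t}{4}|\mathrm{Re}\Phi|}$ on the scaled variable $y$ and confirm that the self-similar rescaling $k+2/3\sim t^{-1/3}$ is the correct balance that keeps the remainder integrable with the stated rate. I would handle this by splitting the real line into the local region $|k+2/3|\le t^{-1/3}\log t$ and its complement: on the complement the Schwartz decay of $h$ alone gives far better than $t^{-3/2}$, while on the local region the order of vanishing at $-2/3$, after the rescaling, produces the gain. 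A secondary technical point, worth flagging but routine, is checking that the poles of the rational regularizer stay off $\bar\Omega$ for all large $t$, which holds because the opening angles $\pm\pi/6$ of $\Gamma$ keep $\Omega$ away from the vertical line through $-2/3$. The completely analogous decomposition for $\varrho_2$ is obtained by the same argument, so I would state it in parallel and not repeat the proof.
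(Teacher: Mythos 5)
The paper does not actually prove this lemma; it simply cites Lemma 5.2 of \cite{AL-Non} and Lemma 4.1 of \cite{CL-JLMS}, whose proofs rest on a Fourier-splitting argument. Your construction reproduces only the first, preparatory step of that argument (the Taylor/rational approximant at $k=-2/3$) and then fails at property (iii). The decomposition you propose is independent of $t$: both the rational approximant and the remainder $h=\varrho_1-\varrho_{1,a}$ are fixed functions of $k$. Consequently $\|h\|_{L^p(\bfR)}$ is a fixed constant and cannot be $O(t^{-3/2})$ unless $h\equiv0$. Your attempted rescue --- splitting $\bfR$ into $|k+\frac{2}{3}|\le t^{-1/3}\log t$ and its complement --- does not close the gap: on the complement, the high-order vanishing of $h$ at $-2/3$ and its Schwartz decay at infinity say nothing about $k$ at distance $O(1)$ from $-2/3$, where $h$ is generically of order one (e.g.\ $h(0)\neq0$ in general). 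The statement of the lemma itself signals the issue: $\varrho_{1,a}(t;k)$ and $\varrho_{1,r}(t;k)$ are written as functions of $t$, so the decomposition must genuinely depend on $t$.

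The missing idea is the second step of the standard argument. After subtracting the rational approximant, one changes variables so that the phase becomes linear, writes the ($t$-independent, non-analytic) remainder as a Fourier integral $h=\int_\bfR \hat H(s)\,\e^{\ii s\psi(k)}\,\dd s$, and splits at frequency $|s|\sim t$: the piece with $s\ge -t/4$ admits an analytic continuation into $\Omega$ bounded by $\e^{\frac{t}{4}|\text{Re}\,\Phi(k)|}$ (this is precisely where the exponential weight in \eqref{3.3} originates --- it is the price of continuing a merely smooth function into the sector, not a factor that ``dominates any polynomially bounded quantity''), while the piece with $s<-t/4$ stays on $\bfR$ and has $L^1$, $L^2$, $L^\infty$ norms $O(t^{-3/2})$ because the high-order vanishing of $h$ at $-2/3$, secured by your Taylor subtraction, makes $\hat H$ decay fast enough that $\int_{|s|>t/4}|\hat H(s)|\,\dd s\le Ct^{-3/2}$. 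Without this splitting, property (iii) is unprovable from your construction; with it, your preparatory subtraction is exactly the right first move, and the two ingredients together yield the lemma.
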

\begin{proof}
See \cite{AL-Non}, Lemma 5.2 or \cite{CL-JLMS}, Lemma 4.1.
\end{proof}
The decomposition of $\varrho_2(k)=\varrho_{2,a}(t;k)+\varrho_{2,r}(t;k)$ can be similarly found. Thus, we obtain a decomposition $\varrho=\varrho_a+\varrho_r$ by letting
\berr
\varrho_a(t;k)\doteq\begin{pmatrix}\varrho_{1,a}(t;k)& \varrho_{2,a}(t;k)\end{pmatrix},\quad
\varrho_r(t;k)\doteq\begin{pmatrix}\varrho_{1,r}(t;k)& \varrho_{2,r}(t;k)\end{pmatrix}.
\eerr

Now we can deform the contour by setting
\be\label{3.4}
M^{(1)}(x,t;k)=\e^{-\frac{1}{3}t\Phi(-\frac{2}{3})\Sigma}
M(x,t;k)G(x,t;k)\e^{\frac{1}{3}t\Phi(-\frac{2}{3})\Sigma},
\ee
where
\be\label{3.5}
G(x,t;k)=\left\{
\begin{aligned}
&\begin{pmatrix}
1 & \varrho_a(t;k)\e^{-t\Phi(k)}\\
\mathbf{0}_{2\times1} & \mathbb{I}_{2\times2}
\end{pmatrix},\qquad k\in \Omega,\\
&\begin{pmatrix}
1 & \mathbf{0}_{1\times2}\\
-\varrho_a^\dag(t;k^*)\e^{t\Phi(k)} & \mathbb{I}_{2\times2}
\end{pmatrix},\quad\,\ k\in \Omega^*,\\
&\mathbb{I}_{3\times3},\qquad\qquad\qquad\qquad\qquad\,\,\, \text{elsewhere}.
\end{aligned}
\right.
\ee
It follows that $M^{(1)}(x,t;k)$ satisfies the $3\times3$ matrix RH problem:\\
$\bullet$ $M^{(1)}(x,t;k)$ is analytic off $\Gamma$, and it takes continuous boundary values on $\Gamma$;\\
$\bullet$ Across the oriented contour $\Gamma$, the boundary values $M^{(1)}_\pm(x,t;k)$ are connected by the relation $M^{(1)}_+(x,t;k)=M^{(1)}_-(x,t;k)J^{(1)}(x,t;k);$\\
$\bullet$ $M^{(1)}(x,t;k)\to\mathbb{I}_{3\times3}$, as $k\to\infty$;\\
where the jump matrix $J^{(1)}(x,t;k)$ is given by
\begin{align}
J^{(1)}_1=&\begin{pmatrix}
1 & \mathbf{0}_{1\times2}\\
\varrho_a^\dag\e^{t(\Phi(k)-\Phi(-\frac{2}{3}))} & \mathbb{I}_{2\times2}
\end{pmatrix},\quad J^{(1)}_2=\begin{pmatrix}
1 & \varrho_a\e^{-t(\Phi(k)-\Phi(-\frac{2}{3}))}\\
\mathbf{0}_{2\times1} & \mathbb{I}_{2\times2}
\end{pmatrix},\nn\\
J^{(1)}_3=&\begin{pmatrix}
1 & \varrho_r\e^{-t(\Phi(k)-\Phi(-\frac{2}{3}))}\\
\mathbf{0}_{2\times1} & \mathbb{I}_{2\times2}
\end{pmatrix}\begin{pmatrix}
1 & \mathbf{0}_{1\times2}\\
\varrho_r^\dag\e^{t(\Phi(k)-\Phi(-\frac{2}{3}))} & \mathbb{I}_{2\times2}
\end{pmatrix},
\end{align}
where $J^{(1)}_j$ denotes the restriction of $J^{(1)}$ to the subcontour labeled by $j$ in Figure \ref{fig3}.

\subsubsection{Local model}

An important observation is that
\be\label{3.7}
t\left(\Phi(k)-\Phi(-\frac{2}{3})\right)=2\ii\left(yz+\frac{4}{3}z^3\right),
\ee
where $z$ is the scaled spectral parameter
\be\label{3.8}
z\doteq\frac{1}{8}\left(\frac{3}{\alpha^2}\right)^{\frac{1}{3}}
\left(k+\frac{2}{3}\right)t^{\frac{1}{3}},
\ee
and
\be\label{3.9}
y\doteq(3\alpha)^{-\frac{1}{3}}\left(\frac{x}{t}-\frac{1}{12\alpha}\right)t^{\frac{2}{3}}.
\ee
Fix suitable $\epsilon>0$ and let $D_\epsilon(-2/3)=\{k\in\bfC||k+2/3|<\epsilon\}$, see Figure \ref{fig4}. Let $\Gamma^\epsilon=(\Gamma\cap D_\epsilon(-2/3))\setminus\bfR$. Then the map $k\mapsto z$ takes $\Gamma^\epsilon$ onto $L^\epsilon\doteq L\cap\{|z|<1/8(3/\alpha^2)^{1/3}t^{1/3}\epsilon\}$, where $L$ is the contour defined in \eqref{A.0}.
Thus for fixed $z$ and large $t$, the jump matrix $J^{(1)}(x,t;k)$ can be approximated as follows:
\be\label{3.10}
J^{(1)}(x,t;k)\rightarrow\left\{
\begin{aligned}
&\begin{pmatrix}
1 & \mathbf{0}_{1\times2}\\
\varrho^\dag\left(-\frac{2}{3}\right)\e^{2\ii(yz+\frac{4}{3}z^3)} & \mathbb{I}_{2\times2}
\end{pmatrix}, \qquad z\in L^\epsilon_1,\\
&\begin{pmatrix}
1 & \varrho\left(-\frac{2}{3}\right)\e^{-2\ii(yz+\frac{4}{3}z^3)}\\
\mathbf{0}_{2\times1} & \mathbb{I}_{2\times2}
\end{pmatrix},\qquad z\in L^\epsilon_2.
\end{aligned}
\right.
\ee

Now, we write $\varrho_j(-2/3)=|\varrho_j(-2/3)|\e^{\ii\phi_j}$ with $\phi_j=\arg \varrho_j(-2/3)$ for $j=1,2$. We conclude that as $t\rightarrow\infty$, $M^{(1)}(x,t;k)$ in $D_\epsilon(-\frac{2}{3})$ approaches the solution $M^c(x,t;k)$ defined by
\be\label{3.11}
M^c(x,t;k)\doteq\e^{\frac{\ii\phi_1}{2}\mathcal{A}}\e^{\frac{\ii\phi_2}{2}\mathcal{B}}M^L(y;z)
\e^{-\frac{\ii\phi_2}{2}\mathcal{B}}\e^{-\frac{\ii\phi_1}{2}\mathcal{A}},
\ee
where
\be\label{3.12}
\mathcal{A}=\begin{pmatrix}1 & 0 &0\\0&-1&0\\0&0&1\end{pmatrix},\quad
\mathcal{B}=\begin{pmatrix}1 & 0 &0\\0&1&0\\0&0&-1\end{pmatrix},
\ee
and $M(y;z)$ is the solution of the coupled Painlev\'e II RH problem of Theorem \ref{lemmaA.1} with $s_1\doteq |\varrho_1(-2/3)|$, $s_2\doteq |\varrho_2(-2/3)|$.
\begin{figure}[htbp]
\centering
\includegraphics[width=4in]{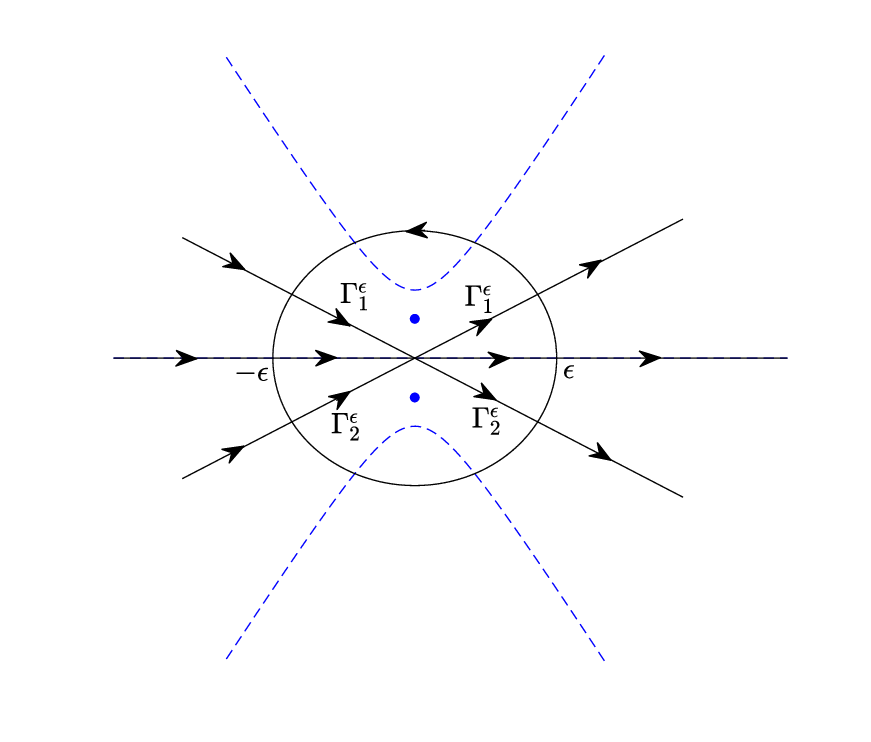}
\caption{The contour $\Gamma^\epsilon=\cup_{j=1}^2\Gamma^\epsilon_j$ and $\hat{\Gamma}$.}\label{fig4}
\end{figure}
\begin{lemma}\label{lemma3.2}
For each $(x,t)\in\mathcal{T}_\geq$, the function $M^c(x,t;k)$ defined by \eqref{3.11} is an analytic function of $k\in D_\epsilon(-\frac{2}{3})\setminus\Gamma^\epsilon$ such that $|M^c(x,t;k)|\leq C$. Across $\Gamma^\epsilon$, $M^c(x,t;k)$ obeys the jump condition $M^c_+=M^c_-J^c$, where the jump matrix $J^c$ satisfies for each $1\leq p\leq\infty$,
\be\label{3.13}
\|J^{(1)}-J^c\|_{L^p(\Gamma^\epsilon)}\leq Ct^{-\frac{1}{3}(1+\frac{1}{p})}.
\ee
In particular, as $t\rightarrow\infty$,
\be\label{3.14}
\|[M^c(x,t;k)]^{-1}-\mathbb{I}_{3\times3}\|_{L^\infty(\partial D_\epsilon(-\frac{2}{3}))}=O(t^{-\frac{1}{3}}),
\ee
and
\be\label{3.15}
\frac{1}{2\pi\ii}\int_{\partial D_\epsilon(-\frac{2}{3})}\left([M^c(x,t;k)]^{-1}-\mathbb{I}_{3\times3}\right)\dd k=-\frac{8\alpha^{\frac{2}{3}}M^c_1(y)}{(3t)^{\frac{1}{3}}}+O(t^{-\frac{2}{3}}),
\ee
where $M^c_1(y)$ satisfies
\be
\left[M^c_1(y)\right]_{12}=\ii\e^{\ii\phi_1} u_p(y),\quad
\left[M^c_1(y)\right]_{13}=\ii\e^{\ii\phi_2} v_p(y),
\ee
moreover, $u_p(y)$ and $v_p(y)$ are the smooth solution of the coupled Painlev\'e II equation \eqref{A.5}.
\end{lemma}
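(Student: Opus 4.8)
The plan is to regard $M^c$ as the model solution $M^L(y;z)$ of Theorem~\ref{lemmaA.1} transported by the affine change of variables \eqref{3.8} together with the constant conjugation \eqref{3.11}; then each claim reduces to a property of $M^L$ read off through this correspondence. Analyticity and the bound $|M^c|\leq C$ come first: the map $k\mapsto z$ in \eqref{3.8} is entire, it carries $D_\epsilon(-2/3)\setminus\Gamma^\epsilon$ into the part of the $z$-plane off the model contour $L$, and on the transition zone \eqref{1.3} the scaled variable \eqref{3.9} is confined to the compact set $|y|\leq(3\alpha)^{-1/3}C$. By Theorem~\ref{lemmaA.1} the function $M^L(y;z)$ is analytic off $L$ and uniformly bounded for such $y$, and conjugation by the constant diagonal matrices $\e^{\frac{\ii\phi_1}{2}\mathcal{A}}$, $\e^{\frac{\ii\phi_2}{2}\mathcal{B}}$ preserves both properties. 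By the same transport, the jumps of $M^L$ across $L$ become, via \eqref{3.11} and the identity \eqref{3.7} (the conjugation restoring the phases $\e^{\ii\phi_j}$ from the real data $s_j$), exactly the limiting jump $J^c$ recorded in \eqref{3.10}, so $M^c_+=M^c_-J^c$ holds by construction.

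The core is the estimate \eqref{3.13}. The decisive feature is that \eqref{3.7} is an exact identity: under \eqref{3.8} the phase $t(\Phi(k)-\Phi(-2/3))$ equals $2\ii(yz+\frac43 z^3)$ with no remainder, so the exponential factors in $J^{(1)}$ and in $J^c$ coincide identically. Hence the entire discrepancy $J^{(1)}-J^c$ on $\Gamma^\epsilon$ stems from replacing $\varrho_a(t;k)$ by $\varrho(-2/3)$, its value at the stationary point, the nonzero blocks being of the form $(\varrho_a(t;k)-\varrho(-2/3))\e^{\mp t(\Phi(k)-\Phi(-2/3))}$. Using $\text{Re}\,\Phi(-2/3)=0$ and property (ii) of Lemma~\ref{lemma3.1}, I would bound $|\varrho_a-\varrho(-2/3)|\leq C|k+2/3|\,\e^{\frac{t}{4}|\text{Re}\Phi(k)|}$; combined with the model exponential this gives, on each steepest-descent ray, a dominating factor $C|k+2/3|\,\e^{-c|z|^3}$ for large $|z|$ (the linear term $yz$ being harmless since $y$ is bounded). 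Substituting $|k+2/3|=8(\alpha^2/3)^{1/3}|z|\,t^{-1/3}$ and $|\dd k|=8(\alpha^2/3)^{1/3}t^{-1/3}|\dd z|$ and computing the norm directly,
\[
\|J^{(1)}-J^c\|_{L^p(\Gamma^\epsilon)}^p\leq C\,t^{-\frac{p+1}{3}}\int_{L}|z|^p\e^{-cp|z|^3}|\dd z|,
\]
and since the remaining integral converges uniformly one obtains $\|J^{(1)}-J^c\|_{L^p(\Gamma^\epsilon)}\leq C t^{-\frac13(1+\frac1p)}$ for all $1\leq p\leq\infty$. No contribution from $\varrho_r$ arises because $\Gamma^\epsilon$ omits the real axis.

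The boundary estimates \eqref{3.14}, \eqref{3.15} follow from the large-$z$ behaviour. On $\partial D_\epsilon(-2/3)$ one has $|k+2/3|=\epsilon$, hence $|z|=\frac18(3/\alpha^2)^{1/3}\epsilon\,t^{1/3}\to\infty$. Inserting the expansion $M^L(y;z)=\mathbb{I}_{3\times3}+\tilde{M}_1(y)/z+O(z^{-2})$ furnished by Theorem~\ref{lemmaA.1}, uniform for $y$ in the compact range, gives $[M^c]^{-1}-\mathbb{I}_{3\times3}=-E\tilde{M}_1(y)E^{-1}/z+O(z^{-2})$ with $E=\e^{\frac{\ii\phi_1}{2}\mathcal{A}}\e^{\frac{\ii\phi_2}{2}\mathcal{B}}=\text{diag}(\e^{\ii(\phi_1+\phi_2)/2},\e^{\ii(-\phi_1+\phi_2)/2},\e^{\ii(\phi_1-\phi_2)/2})$; the $O(1/|z|)=O(t^{-1/3})$ bound is exactly \eqref{3.14}. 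Setting $M^c_1(y)\doteq E\tilde{M}_1(y)E^{-1}$ and computing its $(1,2)$ and $(1,3)$ entries shows the diagonal conjugation multiplies $(\tilde{M}_1)_{12}$ by $\e^{\ii\phi_1}$ and $(\tilde{M}_1)_{13}$ by $\e^{\ii\phi_2}$; since $(\tilde{M}_1)_{12}=\ii u_p(y)$ and $(\tilde{M}_1)_{13}=\ii v_p(y)$ by the Painlev\'e identification of Theorem~\ref{lemmaA.1}, this produces the entries stated for $M^c_1$. Finally, with $1/z=8(\alpha^2/3)^{1/3}t^{-1/3}/(k+2/3)$, the Cauchy integral in \eqref{3.15} is evaluated by the residue at $k=-2/3$, giving the leading term $-8\alpha^{2/3}M^c_1(y)/(3t)^{1/3}$, while the $O(z^{-2})$ tail integrates to $O(t^{-2/3})$.

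The step I expect to be the main obstacle is the $L^p$ bound \eqref{3.13}: one must confirm that \eqref{3.7} cancels the exponentials exactly, extract the correct power of $t$ from the Jacobian of the scaling, and verify the uniform-in-$y$ convergence of the $z$-integral along each ray, while also checking that every off-diagonal entry of the $3\times3$ blocks of $J^{(1)}-J^c$ obeys the same scalar estimate. All the remaining assertions rest on the analyticity, boundedness, and large-$z$ asymptotics of the model solution already guaranteed by Theorem~\ref{lemmaA.1}.
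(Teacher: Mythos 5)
Your proposal is correct and follows essentially the same route as the paper: analyticity, boundedness, and the jump identification are pulled back from the model problem of Theorem \ref{lemmaA.1} via the exact phase identity \eqref{3.7} and the constant diagonal conjugation, the discrepancy $J^{(1)}-J^c$ is controlled by Lemma \ref{lemma3.1}(ii) together with the cubic decay of $\mathrm{Re}(\Phi(k)-\Phi(-2/3))$ along the rays, and \eqref{3.14}--\eqref{3.15} follow from the large-$z$ expansion \eqref{A.3} and Cauchy's formula. The only cosmetic difference is that you compute the $L^p$ norm directly by the change of variables $k\mapsto z$, whereas the paper bounds the $L^1$ and $L^\infty$ norms and interpolates via $\|f\|_{L^p}\leq\|f\|_{L^\infty}^{1-1/p}\|f\|_{L^1}^{1/p}$; these give the same exponent.
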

\begin{proof}
The analyticity and boundedness of $M^c$ can be deduced from relating results about Theorem \ref{lemmaA.1}. Moreover, we have
\be\label{3.17}
J^{(1)}-J^{c}=\left\{
\begin{aligned}
&\begin{pmatrix}0 & \mathbf{0}_{1\times2}\\
\left(\varrho_a^\dag(t;k^*)-\varrho^\dag\left(-\frac{2}{3}\right)\right)\e^{t(\Phi(k)-\Phi(-\frac{2}{3}))} & \mathbf{0}_{2\times2}\end{pmatrix},\quad k\in\Gamma^\epsilon_1,\\
&\begin{pmatrix}0 & \left(\varrho_a(t;k)-\varrho\left(-\frac{2}{3}\right)\right)\e^{-t(\Phi(k)-\Phi(-\frac{2}{3}))}\\
\mathbf{0}_{2\times1} & \mathbf{0}_{2\times2}\end{pmatrix},\quad\,\ k\in\Gamma^\epsilon_2.
\end{aligned}
\right.
\ee
In the following proof of \eqref{3.13}, we take the analysis of the case $k+2/3=l\e^{\pi\ii/6}$ with $l\leq0$ as an example. For $x/t\geq1/(12\alpha)$, since
\be
\text{Re}\left(\Phi(k)-\Phi(-\frac{2}{3})\right)=-\frac{l^3}{64\alpha^2}
-\frac{l}{8\alpha}\left(\frac{x}{t}-\frac{1}{12\alpha}\right)
\geq-\frac{l^3}{64\alpha^2}\geq\frac{|k+\frac{2}{3}|^3}{64\alpha^2},
\ee
and hence,
\be
\e^{-\frac{3}{4}t|\text{Re}(\Phi(k)-\Phi(-\frac{2}{3}))|}\leq\e^{-2|z|^3}.
\ee
It follows from \eqref{3.17} and \eqref{3.3} that
\begin{align}\label{3.20}
|J^{(1)}-J^{c}|\leq&\left|\varrho_a(t;k)-\varrho\left(-\frac{2}{3}\right)\right|
\e^{-t\text{Re}(\Phi(k)-\Phi(-\frac{2}{3}))}\nn\\
\leq& C\left|zt^{-\frac{1}{3}}\right|\e^{-\frac{3}{4}t|\text{Re}(\Phi(k)-\Phi(-\frac{2}{3}))|}
\leq C\left|zt^{-\frac{1}{3}}\right|\e^{-2|z|^3}.
\end{align}
Consequently, writing $l=|z|$, we get
\be
\|J^{(1)}-J^c\|_{L^\infty}\leq C\sup_{0\leq l<\infty}\left(lt^{-\frac{1}{3}}\right)\e^{-2l^3}\leq Ct^{-\frac{1}{3}},
\ee
and
\be
\|J^{(1)}-J^c\|_{L^1}\leq C\int_0^\infty l t^{-\frac{1}{3}}\e^{-2l^3}\frac{\dd l}{t^{\frac{1}{3}}}\leq Ct^{-\frac{2}{3}}.
\ee
Therefore, \eqref{3.13} follows from the general inequality $\|f\|_{L^p}\leq\|f\|^{1-1/p}_{L^\infty}\|f\|_{L^1}^{1/p}$.

If $k\in\partial D_\epsilon(-2/3)$, the variable $z$ tends to infinity as $t\rightarrow\infty$. It follows from the asymptotic expansion \eqref{A.3} and Cauchy's formula that \eqref{3.14} and \eqref{3.15} hold.
\end{proof}

\subsubsection{The final step}

Set $\hat{\Gamma}=\Gamma\cup\partial D_\epsilon(-2/3)$ and assume that the boundary of
$D_\epsilon(-2/3)$ are oriented counterclockwise, see Figure \ref{fig4}. Then the function $\hat{M}(x,t;k)$ defined by
\be\label{3.23}
\hat{M}(x,t;k)=\left\{
\begin{aligned}
&M^{(1)}(x,t;k)[M^c(x,t;k)]^{-1},\quad k\in D_\epsilon\left(-\frac{2}{3}\right),\\
&M^{(1)}(x,t;k),\qquad\qquad\qquad\quad{\text elsewhere}.
\end{aligned}
\right.
\ee
satisfies the following $3\times3$ matrix RH problem:\\
$\bullet$ $\hat{M}(x,t;k)$ is analytic outside the contour $\hat{\Gamma}$ with continuous boundary values on $\hat{\Gamma}$;\\
$\bullet$ For $k\in\hat{\Gamma}$, we have the jump relation
$\hat{M}_+(x,t;k)=\hat{M}_-(x,t;k)\hat{J}(x,t;k);$\\
$\bullet$ $\hat{M}(x,t;k)\to\mathbb{I}_{3\times3}$, as $k\to\infty$;\\
where the jump matrix is given by
\be\label{3.24}
\hat{J}=\left\{
\begin{aligned}
&M^{c}_-J^{(1)}[M^{c}_+]^{-1},\quad k\in\hat{\Gamma}\cap D_\epsilon\left(-\frac{2}{3}\right),\\
&[M^{c}]^{-1},\qquad\quad\quad\,\ k\in\partial D_\epsilon\left(-\frac{2}{3}\right),\\
&J^{(1)},\qquad\qquad\qquad  k\in\hat{\Gamma}\setminus \overline{D_\epsilon\left(-\frac{2}{3}\right)}.
\end{aligned}
\right.
\ee
Let $\hat{w}=\hat{J}-\mathbb{I}_{3\times3}$. It then follows from the proof of Lemma 4.3 in \cite{CL-JLMS} that we have the following conclusion.
\begin{lemma}
For $(x,t)\in\mathcal{T}_{\geq}$ and $1\leq p\leq\infty$,
\begin{align}
\|\hat{w}\|_{L^p(\partial D_\epsilon(-\frac{2}{3}))}\leq& Ct^{-\frac{1}{3}},\label{3.25}\\
\|\hat{w}\|_{L^p(\bfR)}\leq&Ct^{-\frac{3}{2}},\label{3.26}\\
\|\hat{w}\|_{L^p(\Gamma^\epsilon)}\leq&Ct^{-\frac{1}{3}(1+\frac{1}{p})},\label{3.27}\\
\|\hat{w}\|_{L^p(\hat{\Gamma}')}\leq&C\e^{-ct},\label{3.28}
\end{align}
where $\hat{\Gamma}'=\hat{\Gamma}\setminus (\bfR\cup\overline{D_\epsilon(-2/3)})$.
\end{lemma}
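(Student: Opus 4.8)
The plan is to prove the four inequalities one at a time, according to the piece of $\hat{\Gamma}$ on which $\hat{w}=\hat{J}-\mathbb{I}_{3\times3}$ is supported and the corresponding branch of the definition \eqref{3.24}. The whole design of $M^c$ and of the splitting $\varrho=\varrho_a+\varrho_r$ is that, on each such piece, the size of $\hat{w}$ has already been controlled in Lemma \ref{lemma3.1} or Lemma \ref{lemma3.2}, so the task reduces to matching each bound to its source.

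Two of the estimates come from the local model. On $\partial D_\epsilon(-2/3)$ one has $\hat{J}=[M^c]^{-1}$, so $\hat{w}=[M^c]^{-1}-\mathbb{I}_{3\times3}$ and \eqref{3.14} already gives $\|\hat{w}\|_{L^\infty}=O(t^{-1/3})$; since $\partial D_\epsilon(-2/3)$ has fixed finite length, every $L^p$ norm is dominated by the $L^\infty$ norm, which is \eqref{3.25}. On $\Gamma^\epsilon$ one has $\hat{J}=M^c_-J^{(1)}[M^c_+]^{-1}$, and the jump relation $M^c_+=M^c_-J^c$ gives the identity $M^c_-J^c[M^c_+]^{-1}=\mathbb{I}_{3\times3}$, whence
\be
\hat{w}=M^c_-\bigl(J^{(1)}-J^c\bigr)[M^c_+]^{-1}.
\ee
The uniform bound $|M^c|\le C$ of Lemma \ref{lemma3.2} then yields $|\hat{w}|\le C\,|J^{(1)}-J^c|$ pointwise, and \eqref{3.13} delivers \eqref{3.27}.

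The other two estimates come from the analytic approximation on $\Gamma$. On $\bfR$ the jump is $J^{(1)}_3$ (conjugated, on the portion inside $D_\epsilon(-2/3)$, by the continuous and uniformly bounded $M^c$, which is harmless), and its deviation from $\mathbb{I}_{3\times3}$ is built from $\varrho_r$, $\varrho_r^\dag$ and their product times $\e^{\mp t(\Phi(k)-\Phi(-2/3))}$; since $\Phi$ is purely imaginary for $k\in\bfR$ by \eqref{2.8}, these exponentials have modulus one, so $|\hat{w}|\lesssim|\varrho_r|+|\varrho_r|^2$, and the $L^1,L^2,L^\infty$ bounds $O(t^{-3/2})$ of Lemma \ref{lemma3.1}(iii) plus interpolation give \eqref{3.26}. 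On $\hat{\Gamma}'$ we sit on the non-real rays at distance $\ge\epsilon$ from $-2/3$, where $\hat{J}$ equals $J^{(1)}_1$ or $J^{(1)}_2$, so $\hat{w}$ is $\varrho_a$ (or $\varrho_a^\dag$) times $\e^{\mp t(\Phi(k)-\Phi(-2/3))}$. Because $\Phi(-2/3)\in\ii\bfR$ one has $\text{Re}(\Phi(k)-\Phi(-2/3))=\text{Re}\,\Phi(k)$, and the ray computation behind Lemma \ref{lemma3.2} gives $|\text{Re}\,\Phi(k)|\ge|k+2/3|^3/(64\alpha^2)\ge\epsilon^3/(64\alpha^2)>0$; combining this with the growth allowance in \eqref{3.3} and the sign of the phase that makes the relevant entry decay yields
\be
|\hat{w}|\le\frac{C}{1+|k|^2}\,\e^{\frac{t}{4}|\text{Re}\,\Phi(k)|-t|\text{Re}\,\Phi(k)|}=\frac{C}{1+|k|^2}\,\e^{-\frac{3}{4}t|\text{Re}\,\Phi(k)|}\le\frac{C}{1+|k|^2}\,\e^{-ct},
\ee
which is exponentially small and integrable in $k$, hence \eqref{3.28} for every $p$.

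I expect the only genuinely delicate point to be \eqref{3.28}: one must confirm that the factor $\tfrac14$ in \eqref{3.3} leaves a strictly negative net exponent after subtracting the full phase $t\,\text{Re}\,\Phi$, and this in turn requires a sign analysis of $\text{Re}\,\Phi$ ray by ray on $\Omega$ and $\Omega^*$ together with the orientation of the deformation. The remaining three bounds are essentially transcriptions of Lemmas \ref{lemma3.1} and \ref{lemma3.2}, so the bookkeeping there is routine once the correct piece of \eqref{3.24} is identified.
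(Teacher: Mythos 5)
Your proof is correct and takes essentially the same route as the paper, which does not write these estimates out but delegates them to the proof of Lemma 4.3 in \cite{CL-JLMS}: one reads off each branch of \eqref{3.24}, using \eqref{3.14} on $\partial D_\epsilon(-\frac{2}{3})$, the identity $\hat{w}=M^c_-(J^{(1)}-J^c)[M^c_+]^{-1}$ together with \eqref{3.13} on $\Gamma^\epsilon$, the $O(t^{-3/2})$ bounds on $\varrho_r$ from Lemma \ref{lemma3.1}(iii) on $\bfR$ (where the conjugation by the bounded $M^c$ inside the disc is indeed harmless), and the strict positivity of $|\mathrm{Re}\,\Phi|\geq\epsilon^3/(64\alpha^2)$ on the rays outside the disc against the growth allowance $\e^{\frac{t}{4}|\mathrm{Re}\,\Phi|}$ in \eqref{3.3}. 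I see no gaps.
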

We now derive the asymptotic formula of the solution of Cauchy problem \eqref{1.1}-\eqref{1.2} for the coupled Hirota equation in sector $\mathcal{T}_{\geq}$. Let $\hat{C}$ be the Cauchy operator associated with $\hat{\Gamma}$ and let $\hat{C}_{\hat{w}}g\doteq\hat{C}_-(g\hat{w})$. Then by estimates \eqref{3.25}-\eqref{3.28}, we can write
\be\label{3.29}
\hat{M}(x,t;k)=\mathbb{I}_{3\times3}+\frac{1}{2\pi\ii}
\int_{\hat{\Gamma}}\frac{(\hat{\mu}\hat{w})(x,t;s)}{s-k}\dd s,
\ee
where the $3\times3$ matrix-valued function $\hat{\mu}(x,t;k)$ is defined by $\hat{\mu}=\mathbb{I}_{3\times3}+\hat{C}_{\hat{w}}\hat{\mu}.$
Moreover, the Neumann series argument implies that
\be\label{3.30}
\|\hat{\mu}(x,t;\cdot)-\mathbb{I}_{3\times3}\|_{L^2(\hat{\Gamma})}=O(t^{-\frac{1}{3}}),\quad t\rightarrow\infty.
\ee
By expanding \eqref{3.29} and inverting the transformations \eqref{3.4} and \eqref{3.23}, we find the relation,
\be\label{3.31}
\lim_{k\rightarrow\infty}k(M(x,t;k)-\mathbb{I}_{3\times3})
=-\frac{1}{2\pi\ii}\e^{\frac{1}{3}t\Phi(-\frac{2}{3})\Sigma}
\int_{\hat{\Gamma}}(\hat{\mu}\hat{w})(x,t;s)\dd s\e^{-\frac{1}{3}t\Phi(-\frac{2}{3})\Sigma}.
\ee
By \eqref{3.15}, \eqref{3.25} and \eqref{3.30}, on the subcontour $\partial D_\epsilon(-2/3)$, we have
\be
\begin{aligned}
-\frac{1}{2\pi\ii}\int_{\partial D_\epsilon(-\frac{2}{3})}(\hat{\mu}\hat{w})(x,t;s)\dd s
=&-\frac{1}{2\pi\ii}\int_{\partial D_\epsilon(-\frac{2}{3})}\hat{w}\dd s
-\frac{1}{2\pi\ii}\int_{\partial D_\epsilon(-\frac{2}{3})}(\hat{\mu}-\mathbb{I}_{3\times3})\hat{w}\dd s\\
=&\frac{8\alpha^{\frac{2}{3}}M^c_1(y)}{(3t)^{\frac{1}{3}}}+O(t^{-\frac{2}{3}}).
\end{aligned}
\ee
In addition, the contributions from $\bfR$, $\Gamma^\epsilon$ and $\hat{\Gamma}'$ to the right-hand side of \eqref{3.31} are $O(t^{-3/2})$, $O(t^{-2/3})$ and $O(e^{-ct})$, respectively.
Recalling the reconstructional formula \eqref{2.9} and the definition of $\Phi(k)$ in \eqref{2.8}, we immediately obtain the asymptotic formula \eqref{1.4}.

\subsection{Asymptotics in Sector $\mathcal{T}_{\leq}$}\label{sub3.2}
We now analyze the asymptotics in the Sector $\mathcal{T}_{\leq}$. In this sector, the critical points $k_1$ and $k_2$ given respectively in \eqref{2.10} and \eqref{2.11} are real and as $t\to\infty$ approach $-2/3$ at least as fast as $t^{-1/3}$.

\subsubsection{Modification to the basic RH problem}
In this case, we define the oriented contour $\Xi=\bfR\cup\Xi_1\cup\Xi_2$ and open subsets $\Omega$ and $\Omega^*$ as depicted in Figure \ref{fig5}, where
\be
\begin{aligned}
\Xi_1=&\{k=k_2+l\e^{\frac{\pi\ii}{6}}|l\geq0\}\cup\{k=k_1+l\e^{\frac{5\pi\ii}{6}}|l\geq0\},\\
\Xi_2=&\{k=k_2+l\e^{-\frac{\pi\ii}{6}}|l\geq0\}\cup\{k=k_1+l\e^{-\frac{5\pi\ii}{6}}|l\geq0\}.
\end{aligned}
\ee

\begin{figure}[htbp]
\centering
\includegraphics[width=4in]{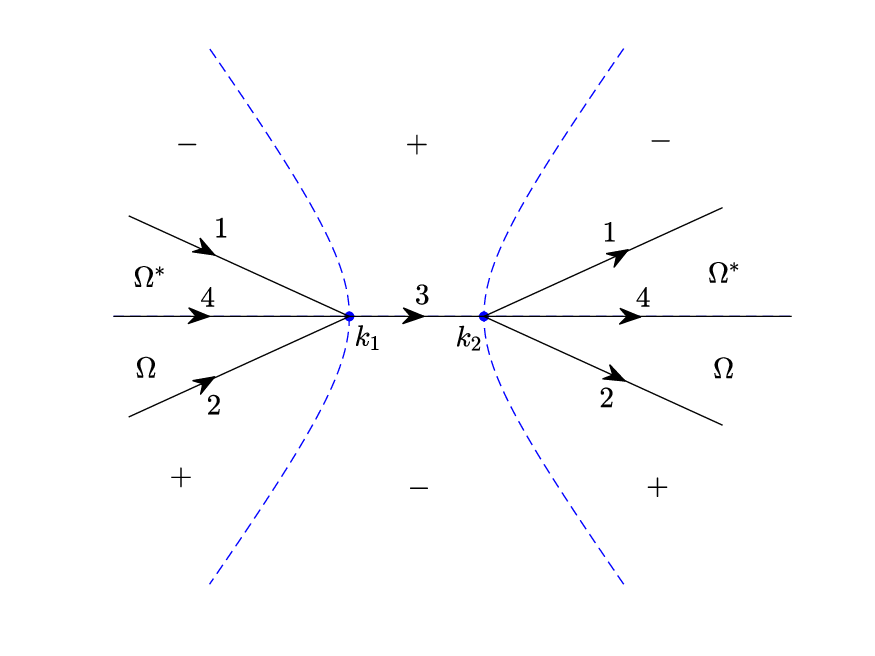}
\caption{The oriented contour $\Xi$ and sets $\Omega$, $\Omega^*$.}\label{fig5}
\end{figure}
\begin{lemma}(Analytic approximation)\label{lemma3.3}
There exists a decomposition
\be
\varrho_1(k)=\varrho_{1,a}(x,t;k)+\varrho_{1,r}(x,t;k), \quad k\in(-\infty,k_1)\cup(k_2,\infty),
\ee
where the functions $\varrho_{1,a}$ and $\varrho_{1,r}$ satisfy the following properties:\\
(i) For $(x,t)\in\mathcal{T}_\leq$, $\varrho_{1,a}(x,t;k)$ is defined and continuous for $k\in\bar{\Omega}$ and analytic for $\Omega$.\\
(ii) The function $\varrho_{1,a}$ satisfies
\be
\left\{
\begin{aligned}
&|\varrho_{1,a}(x,t;k)|\leq \frac{C}{1+|k|^2}\e^{\frac{t}{4}|\text{Re}\Phi(k)|},
\quad k\in\bar{\Omega},\\
&|\varrho_{1,a}(x,t;k)-\varrho(k_j)|\leq C|k-k_j|\e^{\frac{t}{4}|\text{Re}\Phi(k)|},\quad k\in\bar{\Omega},\ j=1,2.
\end{aligned}
\right.
\ee
(iii) The $L^1, L^2$ and $L^\infty$ norms of the function $\varrho_{1,r}(x,t;\cdot)$ on $(-\infty,k_1)\cup(k_2,\infty)$ are $O(t^{-3/2})$ as $t\rightarrow\infty$.\\
\end{lemma}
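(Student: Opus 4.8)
The plan is to reduce everything to $\varrho_1$ (the decomposition of $\varrho_2$ being word-for-word identical) and to adapt the single-stationary-point analytic approximation of \cite{AL-Non,CL-JLMS} invoked in Lemma \ref{lemma3.1} to the present situation of two \emph{real} critical points. Since $u_0,v_0\in\mathcal{S}(\bfR)$, the coefficient $\varrho_1$ is smooth with rapid decay on $\bfR$ but is \emph{not} analytic off the axis, so it cannot be deformed directly; the role of the lemma is to replace it by an analytic surrogate whose error stays on $\bfR$ and is small in $t$. The essential new feature compared with Sector $\mathcal{T}_\geq$ is that here $k_1<k_2$ are real, both lying within $O(t^{-1/3})$ of $-2/3$, so the decomposition must live on $(-\infty,k_1)\cup(k_2,\infty)$ and, by the vanishing of $\text{Re}\,\Phi$ on $\bfR$, condition (ii) forces the analytic part to match $\varrho_1$ exactly at $k_1$ and $k_2$ and to first order near each.

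First I would peel off an explicit analytic piece carrying this matching. Fixing a base point $c$ at a definite distance from $\bar\Omega$ (so the poles introduced below stay away from the deformation region), I would set
\[
\varrho_{1,a}^{\mathrm{p}}(x,t;k)=\varrho_1(k_1)\,\frac{k-k_2}{k_1-k_2}\Big(\frac{k_1-c}{k-c}\Big)^{3}+\varrho_1(k_2)\,\frac{k-k_1}{k_2-k_1}\Big(\frac{k_2-c}{k-c}\Big)^{3},
\]
which is analytic in $\Omega$, decays like $|k|^{-2}$ at infinity, and satisfies $\varrho_{1,a}^{\mathrm{p}}(k_j)=\varrho_1(k_j)$. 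Although each summand carries a factor $(k_1-k_2)^{-1}$, the mean-value (divided-difference) form of the combination shows that it and its slope remain bounded uniformly as $k_1,k_2\to-2/3$. This yields the weighted bound in (ii) with the trivial exponential weight (since $\e^{\frac t4|\text{Re}\,\Phi|}\geq1$ on $\bar\Omega$) together with the first-order matching estimate at both $k_1,k_2$.

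Next I would treat the correction $\chi=\varrho_1-\varrho_{1,a}^{\mathrm{p}}$, which is smooth, rapidly decaying, and vanishes at $k_1,k_2$. Writing $\chi=(k-k_1)(k-k_2)\tilde\chi$ with $\tilde\chi$ still smooth and rapidly decaying, I would split $\tilde\chi=\tilde\chi_a+\tilde\chi_r$ through its Fourier transform, continuing the appropriately signed frequencies into $\Omega$ along the rays $\Xi_1,\Xi_2$ — where $\e^{\pm t\Phi}$ supplies the decay encoded in the weight of (ii) — and retaining the complementary, $t$-dependent low-frequency part on $\bfR$. Setting $\chi_a=(k-k_1)(k-k_2)\tilde\chi_a$ and $\varrho_{1,r}=(k-k_1)(k-k_2)\tilde\chi_r$, the prefactor guarantees $\chi_a(k_j)=0$, so $\varrho_{1,a}:=\varrho_{1,a}^{\mathrm{p}}+\chi_a$ still matches $\varrho_1$ exactly at $k_1,k_2$ and (ii) survives; the Schwartz decay of $\widehat{\varrho_1}$ together with the $t$-dependent placement of the frequency cutoff gives $\|\varrho_{1,r}\|_{L^1\cap L^2\cap L^\infty}=O(t^{-3/2})$, which is (iii), while away from a fixed neighborhood of $-2/3$ the phase is non-stationary on the rays and that portion of $\chi_a$ is exponentially small.

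The hard part is uniformity as the two stationary points coalesce: since $k_2-k_1=O(t^{-1/3})\to0$, a partition of unity separating $k_1$ from $k_2$ would have cutoffs whose derivatives blow up like $t^{1/3}$, destroying the constants in (ii)–(iii). I would avoid this by keeping $k_1,k_2$ inside a single neighborhood of $-2/3$ of fixed radius and carrying out all estimates in the scaled variable $z$ of \eqref{3.8}, in which $k_1,k_2$ map to points at $O(1)$ separation and $t(\Phi(k)-\Phi(-2/3))$ becomes the fixed cubic $2\ii(yz+\tfrac43 z^3)$ of \eqref{3.7}; the divided-difference structure of $\varrho_{1,a}^{\mathrm{p}}$ then renders every constant independent of $(x,t)\in\mathcal{T}_\leq$. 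Establishing this uniform control through the merging $k_1,k_2\to-2/3$, rather than any individual estimate, is where the real work lies.
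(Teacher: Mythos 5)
The paper offers no proof of this lemma at all---it simply cites Lemma 4.8 of \cite{JL-IUMJ}---so your attempt is necessarily more explicit than the source. Your overall architecture (an explicit rational interpolant matching $\varrho_1$ at $k_1,k_2$ and decaying at infinity, followed by a Fourier-transform splitting of the smooth remainder into an analytically continuable piece and a small piece left on the contour) is indeed the standard mechanism behind such decompositions. But as written it has a quantitative gap that makes (iii) fail.

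The gap is the matching order. Matching only the \emph{values} $\varrho_1(k_j)$ cannot produce a remainder of size $O(t^{-3/2})$. The size of $\varrho_{1,r}$ is governed by the decay of the Fourier transform of $\chi=\varrho_1-\varrho_{1,a}^{\mathrm p}$ taken \emph{in the phase variable}, not in $k$: the splitting must be performed with respect to a reparametrization by $\Phi$, because on the rays one has $\text{Re}\,(\Phi(k)-\Phi(-\tfrac23))=\frac{1}{64\alpha^2}\bigl(l^3+\tfrac{3\sqrt3}{2}(k_2+\tfrac23)l^2\bigr)$ with $k_2+\tfrac23=O(t^{-1/3})$, so a plain $k$-Fourier cutoff yields an ``analytic part'' whose continuation grows like $\e^{|s|\,|\text{Im}\,k|}$ and is not dominated by $\e^{\frac t4|\text{Re}\,\Phi|}$ near the critical points. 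Once the splitting is done in the phase variable, the degeneracy $\Phi-\Phi(k_j)\sim(k-k_j)^2$ (worsening to $\sim(k+\tfrac23)^3$ as $k_1,k_2$ merge) means that if $\chi$ vanishes only to first order at $k_j$, the composed function is merely $H^1$ there; the tail of its Fourier transform beyond frequency $\sim t$ then gives only $O(t^{-1/2})$, not $O(t^{-3/2})$. To reach $O(t^{-3/2})$ you must subtract a two-point \emph{Hermite} (osculatory) rational interpolant matching $\varrho_1$ together with at least its first two derivatives at both $k_1$ and $k_2$, written in divided-difference form so the coefficients stay bounded as $k_1,k_2\to-\tfrac23$, so that $\chi$ vanishes to order at least three at each critical point. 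Your closing paragraph correctly identifies the coalescence of $k_1,k_2$ as the place where uniformity must be fought for, but rescaling to the variable $z$ does not by itself repair the insufficient matching order; note also that condition (ii) is merely the weak consequence recorded for later use, while (iii) is the binding design constraint on the construction.
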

\begin{proof}
See \cite{JL-IUMJ}, Lemma 4.8.
\end{proof}
Similar decomposition holds for $\varrho_2(k)$, namely, $\varrho_2(k)=\varrho_{2,a}(x,t;k)+\varrho_{2,r}(x,t;k)$. Then a decomposition of $\varrho=\varrho_a+\varrho_r$ is achieved by setting
\berr
\varrho_a(x,t;k)\doteq\begin{pmatrix}\varrho_{1,a}(x,t;k)& \varrho_{2,a}(x,t;k)\end{pmatrix},\quad
\varrho_r(x,t;k)\doteq\begin{pmatrix}\varrho_{1,r}(x,t;k)& \varrho_{2,r}(x,t;k)\end{pmatrix}.
\eerr

Using the decomposition of $\varrho$, we define $M^{(1)}(x,t;k)$ by \eqref{3.4} with $G(x,t;k)$
given by \eqref{3.5}. Hence $M^{(1)}$ satisfies the RH problem:\\
$\bullet$ $M^{(1)}(x,t;k)$ is analytic for $k\in\bfC\setminus\Xi$;\\
$\bullet$ The continuous boundary values $M^{(1)}_\pm$ satisfy
$M^{(1)}_+(x,t;k)=M^{(1)}_-(x,t;k)J^{(1)}(x,t;k)$ for $k\in\Xi$;\\
$\bullet$ $M^{(1)}(x,t;k)\rightarrow \mathbb{I}_{3\times3}$ as $k\rightarrow\infty$;\\
where the jump matrix $J^{(1)}(x,t;k)$ is given by
\begin{align}
J^{(1)}_1=&\begin{pmatrix}
1 & \mathbf{0}_{1\times2}\\
\varrho_a^\dag\e^{t(\Phi(k)-\Phi(-\frac{2}{3}))} & \mathbb{I}_{2\times2}
\end{pmatrix},\quad J^{(1)}_2=\begin{pmatrix}
1 & \varrho_a\e^{-t(\Phi(k)-\Phi(-\frac{2}{3}))}\\
\mathbf{0}_{2\times1} & \mathbb{I}_{2\times2}
\end{pmatrix},\nn\\
J^{(1)}_3=&\begin{pmatrix}
1 & \varrho\e^{-t(\Phi(k)-\Phi(-\frac{2}{3}))}\\
\mathbf{0}_{2\times1} & \mathbb{I}_{2\times2}
\end{pmatrix}\begin{pmatrix}
1 & \mathbf{0}_{1\times2}\\
\varrho^\dag\e^{t(\Phi(k)-\Phi(-\frac{2}{3}))} & \mathbb{I}_{2\times2}
\end{pmatrix},\\
J^{(1)}_4=&\begin{pmatrix}
1 & \varrho_r\e^{-t(\Phi(k)-\Phi(-\frac{2}{3}))}\\
\mathbf{0}_{2\times1} & \mathbb{I}_{2\times2}
\end{pmatrix}\begin{pmatrix}
1 & \mathbf{0}_{1\times2}\\
\varrho_r^\dag\e^{t(\Phi(k)-\Phi(-\frac{2}{3}))} & \mathbb{I}_{2\times2}
\end{pmatrix},\nn
\end{align}
with subscripts referring to Figure \ref{fig5}.

\begin{figure}[htbp]
\centering
\includegraphics[width=4in]{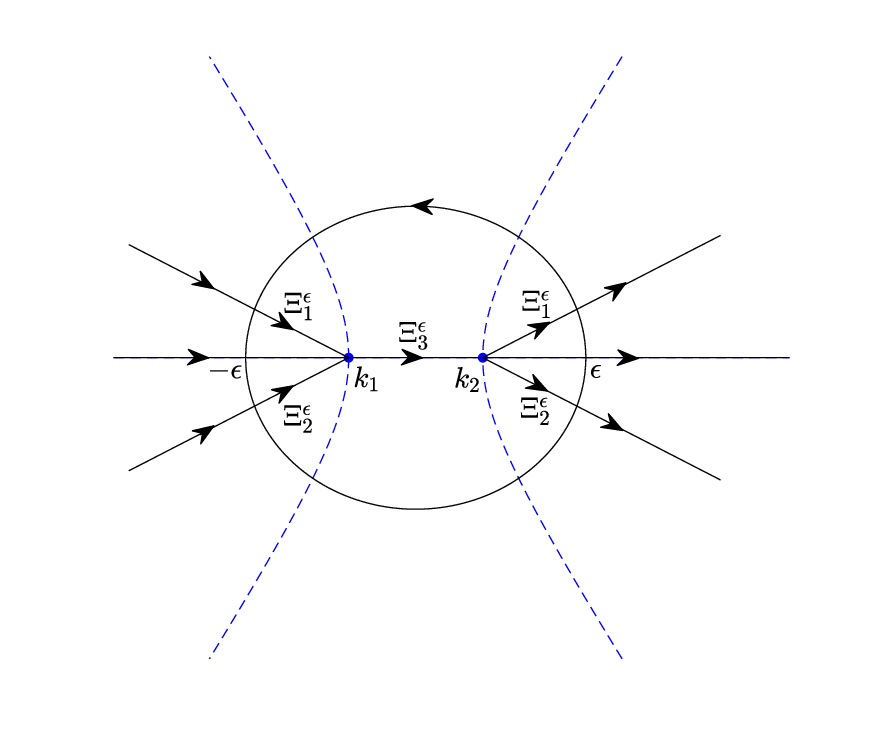}
\caption{The oriented contour $\Xi^\epsilon=\cup_{j=1}^3\Xi^\epsilon_j$ and $\hat{\Xi}$.}\label{fig6}
\end{figure}
\subsubsection{Local model}
As in Subsection \ref{sub3.1}, we introduce the new variables $z$ and $y$ by \eqref{3.8} and \eqref{3.9}, respectively. We now have $-C\leq y\leq0$. Define contour $\Xi^\epsilon=(\Xi\cap D_\epsilon(-2/3))\setminus((-\infty,k_1)\cup(k_2,\infty))$, see Figure \ref{fig6}.
Let $X$ be the contour defined in \eqref{B.1} with $z_2=1/8(3/\alpha^2)^{-1/3}(k_2+2/3)t^{1/3}=\sqrt{-y}/2\geq0.$
The map $k\mapsto z$ maps $\Xi^\epsilon$ onto $X^\epsilon\doteq X\cap\{|z|<1/8(3/\alpha^2)^{1/3}t^{1/3}\epsilon\}$. Therefore, as $t\to\infty,$ we have
\be
J^{(1)}(x,t;k)\rightarrow\left\{
\begin{aligned}
&\begin{pmatrix}
1 & \mathbf{0}_{1\times2}\\
\varrho^\dag\left(-\frac{2}{3}\right)\e^{2\ii(yz+\frac{4}{3}z^3)} & \mathbb{I}_{2\times2}
\end{pmatrix}, \qquad z\in X^\epsilon_1,\\
&\begin{pmatrix}
1 & \varrho\left(-\frac{2}{3}\right)\e^{-2\ii(yz+\frac{4}{3}z^3)}\\
\mathbf{0}_{2\times1} & \mathbb{I}_{2\times2}
\end{pmatrix},\qquad z\in X^\epsilon_2,\\
&\begin{pmatrix}
1 & \varrho\left(-\frac{2}{3}\right)\e^{-2\ii(yz+\frac{4}{3}z^3)}\\
\mathbf{0}_{2\times1} & \mathbb{I}_{2\times2}
\end{pmatrix}\begin{pmatrix}
1 & \mathbf{0}_{1\times2}\\
\varrho^\dag\left(-\frac{2}{3}\right)\e^{2\ii(yz+\frac{4}{3}z^3)} & \mathbb{I}_{2\times2}
\end{pmatrix},\quad z\in X^\epsilon_3.
\end{aligned}
\right.
\ee
The double $(y,z_2)$ belongs to the parameter set $\mathbb{P}$ in \eqref{B.4} whenever $(x,t)\in\mathcal{T}_\leq$. Thus, by Theorem \ref{lemmaB.1}, we can choose
\be\label{3.38}
M^r(x,t;k)\doteq\e^{\frac{\ii\phi_1}{2}\mathcal{A}}\e^{\frac{\ii\phi_2}{2}\mathcal{B}}
M^X(y,z_2;z)\e^{-\frac{\ii\phi_2}{2}\mathcal{B}}\e^{-\frac{\ii\phi_1}{2}\mathcal{A}}
\ee
to approach $M^{(1)}(x,t;k)$ in $D_\epsilon(-2/3)$, where matrices $\mathcal{A},$ $\mathcal{B}$ are given in \eqref{3.12} and $M^X(y,z_2;z)$ is the solution of the model RH problem of Theorem \ref{lemmaB.1} with $z_2=\sqrt{-y}/2$ and $s_1\doteq |\varrho_1(-2/3)|$, $s_2\doteq |\varrho_2(-2/3)|$.

\begin{lemma}\label{lemma3.4}
For each $(x,t)\in\mathcal{T}_\leq$, the function $M^r(x,t;k)$ defined in \eqref{3.38} is analytic for $k\in D_\epsilon(-2/3)\setminus\Xi^\epsilon$ such that $|M^r(x,t;k)|\leq C$. On the contour $\Xi^\epsilon$, $M^r(x,t;k)$ satisfies the jump condition $M^r_+=M^r_-J^r$ and the jump matrix $J^r$ obeys the estimate for each $1\leq p\leq\infty$,
\be\label{3.39}
\|J^{(1)}-J^r\|_{L^p(\Xi^\epsilon)}\leq Ct^{-\frac{1}{3}(1+\frac{1}{p})}.
\ee
In particular,
\begin{align}
\|[M^r(x,t;k)]^{-1}-\mathbb{I}_{3\times3}\|_{L^\infty(\partial D_\epsilon(-\frac{2}{3}))}&=O(t^{-\frac{1}{3}}),\label{3.40}\\
\frac{1}{2\pi\ii}\int_{\partial D_\epsilon(-\frac{2}{3})}
\left([M^r(x,t;k)]^{-1}-\mathbb{I}_{3\times3}\right)\dd k
&=-\frac{8\alpha^{\frac{2}{3}}M^r_1(y)}{(3t)^{\frac{1}{3}}}+O(t^{-\frac{2}{3}}),
\end{align}
where $M^r_1(y)$ satisfies
\be
\left[M^r_1(y)\right]_{12}=\ii\e^{\ii\phi_1} u_p(y),\quad
\left[M^r_1(y)\right]_{13}=\ii\e^{\ii\phi_2} v_p(y).
\ee
\end{lemma}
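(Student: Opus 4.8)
The plan is to follow the proof of Lemma \ref{lemma3.2} for the sector $\mathcal{T}_\geq$ almost verbatim, the only genuinely new feature being that the two critical points $k_1$ and $k_2$ are now real and distinct (though both at distance $O(t^{-1/3})$ from $-2/3$), so that the local parametrix must be built from the two-cut problem of Theorem \ref{lemmaB.1} and the exponential estimates must be centered at $k_1,k_2$ rather than at $-2/3$. The analyticity of $M^r$ in $D_\epsilon(-2/3)\setminus\Xi^\epsilon$ and the uniform bound $|M^r|\le C$ are immediate from \eqref{3.38}: the map $k\mapsto z$ in \eqref{3.8} is affine and the conjugating factors $\e^{\ii\phi_j\mathcal{A}/2}$, $\e^{\ii\phi_j\mathcal{B}/2}$ are constant and unitary, so these properties are inherited from the corresponding properties of $M^X(y,z_2;z)$ supplied by Theorem \ref{lemmaB.1}. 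Likewise the jump relation $M^r_+=M^r_-J^r$ on $\Xi^\epsilon$, with $J^r$ the conjugated limiting jump, is a direct consequence of the jump of $M^X$ on $X^\epsilon$.

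The heart of the argument is the estimate \eqref{3.39}. As in Lemma \ref{lemma3.2}, $J^{(1)}-J^r$ is off-diagonal and, on $\Xi^\epsilon_1$ and $\Xi^\epsilon_2$, is governed by $(\varrho_a(x,t;k)-\varrho(-2/3))$ times the exponential $\e^{\mp t(\Phi(k)-\Phi(-2/3))}$; on the real segment $\Xi^\epsilon_3$ between $k_1$ and $k_2$ it is governed instead by $(\varrho(k)-\varrho(-2/3))$, since there the jump carries the full reflection coefficient. The key point is that, because $x/t-1/(12\alpha)\le0$ in $\mathcal{T}_\le$, a naive estimate centered at $-2/3$ would fail; this is precisely why the rays of $\Xi$ emanate from the genuine critical points. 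Writing the phase in the scaled variable through \eqref{3.7}--\eqref{3.9}, the stationary points of $yz+\tfrac43z^3$ sit at $z=\pm z_2$ with $z_2=\sqrt{-y}/2\ge0$, and on the ray $z=z_2+\rho\e^{\ii\pi/6}$ ($\rho\ge0$) emanating from $k_2$ a short computation gives
\be
\text{Re}\,t\!\left(\Phi(k)-\Phi\!\left(-\tfrac23\right)\right)=-\tfrac83\rho^3-4\sqrt3\,z_2\rho^2\le-\tfrac83\rho^3,
\ee
with the analogous identity holding, by symmetry, on the ray from $k_1$ and on the lower rays of $\Xi_2$. Thus the relevant exponential is bounded by $\e^{-\frac83\rho^3}$ uniformly in $y\in[-C,0]$ (the quadratic term only helps), exactly reproducing the cubic decay used in \eqref{3.20}.

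It then remains to combine this with pointwise control of the coefficient. The second estimate in Lemma \ref{lemma3.3} gives $|\varrho_a(x,t;k)-\varrho(k_j)|\le C|k-k_j|=C\rho\,t^{-1/3}$, where $\rho$ is the distance to the corresponding saddle, while smoothness of $\varrho$ together with $|k_j+2/3|=O(t^{-1/3})$ yields $|\varrho(k_j)-\varrho(-2/3)|=O(t^{-1/3})$; hence $|\varrho_a(x,t;k)-\varrho(-2/3)|\le C(1+\rho)t^{-1/3}$ on $\Xi^\epsilon_1\cup\Xi^\epsilon_2$, and $|J^{(1)}-J^r|\le C(1+\rho)t^{-1/3}\e^{-\frac83\rho^3}$. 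Taking the supremum over $\rho\ge0$ gives the $L^\infty$ bound $O(t^{-1/3})$, and integrating against the Jacobian $\dd|k|=Ct^{-1/3}\dd\rho$ gives the $L^1$ bound $O(t^{-2/3})$; on $\Xi^\epsilon_3$ the exponentials are purely oscillatory (here $z$ is real), so $|J^{(1)}-J^r|\le C|k+2/3|=O(t^{-1/3})$ over a segment of length $|k_2-k_1|=O(t^{-1/3})$, again producing $O(t^{-1/3})$ in $L^\infty$ and $O(t^{-2/3})$ in $L^1$. The interpolation inequality $\|f\|_{L^p}\le\|f\|_{L^\infty}^{1-1/p}\|f\|_{L^1}^{1/p}$ then yields \eqref{3.39}. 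I expect the delicate part of this step to be the uniformity as $y\to0^-$, i.e.\ as the two saddles $\pm z_2$ coalesce; the displayed identity shows the coalescence is harmless because the degenerating quadratic term carries a favorable sign, but keeping all constants uniform over the whole range $(y,z_2)\in\mathbb{P}$ of \eqref{B.4} is where care is needed.

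Finally, on $\partial D_\epsilon(-2/3)$ we have $|k+2/3|=\epsilon$ fixed, so $z\to\infty$ as $t\to\infty$; substituting the large-$z$ asymptotic expansion of $M^X(y,z_2;z)$ furnished by Theorem \ref{lemmaB.1} into \eqref{3.38} gives \eqref{3.40}. Reading off the $1/z$ coefficient of that expansion, which after the conjugation in \eqref{3.38} supplies the factors $\e^{\ii\phi_1},\e^{\ii\phi_2}$ and is expressed through the coupled Painlev\'e II functions $u_p(y),v_p(y)$, together with Cauchy's formula then yields the residue identity and the stated entries of $M^r_1(y)$, completing the proof.
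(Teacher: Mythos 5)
Your proposal is correct and follows essentially the same route as the paper: the same splitting $|\varrho_a-\varrho(-2/3)|\leq|\varrho_a-\varrho(k_2)|+|\varrho(k_2)-\varrho(-2/3)|$ with the bounds of Lemma \ref{lemma3.3}, the same cubic exponential decay along the rays emanating from the real saddles $k_1,k_2$ (your identity $\mathrm{Re}\,t(\Phi-\Phi(-2/3))=-\tfrac83\rho^3-4\sqrt3\,z_2\rho^2$ is exactly the paper's $\frac{1}{64\alpha^2}(l^3+\frac{3\sqrt3}{2}(k_2+\frac23)l^2)$ after rescaling, including the observation that the quadratic term has a favorable sign so the coalescence $z_2\to0$ is harmless), the same oscillatory treatment of the segment $[k_1,k_2]$, and the same $L^\infty$--$L^1$ interpolation; the only cosmetic difference is that you keep the exponential centered at the saddle while the paper re-centers it at $-2/3$ via a two-case comparison of $|k-k_2|$ with $k_2+2/3$, which changes nothing.
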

\begin{proof}
We only give the proof of estimate \eqref{3.39}. First, we know that
\be\label{3.43}
J^{(1)}-J^{r}=\left\{
\begin{aligned}
&\begin{pmatrix}0 & \mathbf{0}_{1\times2}\\
\left(\varrho_a^\dag(x,t;k^*)-\varrho^\dag\left(-\frac{2}{3}\right)\right)\e^{t(\Phi(k)-\Phi(-\frac{2}{3}))} & \mathbf{0}_{2\times2}\end{pmatrix},\quad k\in\Xi^\epsilon_1,\\
&\begin{pmatrix}0 & \left(\varrho_a(x,t;k)-\varrho\left(-\frac{2}{3}\right)\right)\e^{-t(\Phi(k)-\Phi(-\frac{2}{3}))}\\
\mathbf{0}_{2\times1} & \mathbf{0}_{2\times2}\end{pmatrix},\quad\,\ k\in\Xi^\epsilon_2,\\
&\begin{pmatrix}\varrho(k)\varrho^\dag(k)-\varrho\left(-\frac{2}{3}\right)\varrho^\dag\left(-\frac{2}{3}\right) & \left(\varrho(k)-\varrho\left(-\frac{2}{3}\right)\right)\e^{-t(\Phi(k)-\Phi(-\frac{2}{3}))}\\
\left(\varrho^\dag(k)-\varrho^\dag\left(-\frac{2}{3}\right)\right)\e^{t(\Phi(k)-\Phi(-\frac{2}{3}))} & \mathbf{0}_{2\times2}\end{pmatrix},\quad k\in\Xi^\epsilon_3.
\end{aligned}
\right.
\ee
For $k=k_2+l\e^{-\frac{\pi\ii}{6}}$, $0\leq l\leq\epsilon$, we obtain
\be
\text{Re}\left(\Phi(k)-\Phi(-\frac{2}{3})\right)
=\frac{1}{64\alpha^2}\left(l^3+\frac{3\sqrt{3}}{2}(k_2+\frac{2}{3})l^2\right)
\geq\frac{|k-k_2|^3}{64\alpha^2}.
\ee
If $|k-k_2|\geq k_2+2/3$, then $|k-k_2|\geq|k+2/3|/2$, and hence
\be
\e^{-\frac{3}{256\alpha^2}t|k-k_2|^3}\leq\e^{-\frac{1}{1024\alpha^2}t|k+\frac{2}{3}|^3}.
\ee
If $|k-k_2|<k_2+2/3$, then $|k+2/3|\leq2(k_2+2/3)\leq Ct^{-1/3}$, and so
\be
\e^{-\frac{3}{256\alpha^2}t|k-k_2|^3}\leq 1\leq C\e^{-\frac{1}{1024\alpha^2} t|k+\frac{2}{3}|^3}.
\ee
Thus, for $k=k_2+l\e^{\frac{\pi\ii}{6}}$, $0\leq l\leq\varepsilon$, we find
\be
\e^{-\frac{3}{4}t|\text{Re}(\Phi(k)-\Phi(-\frac{2}{3}))|}
\leq\e^{-\frac{3}{256\alpha^2}t|k-k_2|^3}\leq C\e^{-\frac{1}{1024\alpha^2} t|k+\frac{2}{3}|^3}\leq C\e^{-\frac{1}{6}|z|^3}.
\ee
As a consequence, we have
\be\label{3.48}
\begin{aligned}
&\left|\varrho_a(x,t;k)-\varrho\left(-\frac{2}{3}\right)\right|\e^{-t(\Phi(k)-\Phi(-\frac{2}{3}))}\\
&\leq C|\varrho_a(x,t;k)-\varrho(k_2)|\e^{-t\text{Re}\Phi}
+C\left|\varrho(k_2)-\varrho\left(-\frac{2}{3}\right)\right|\e^{-t\text{Re}\Phi}\\
&\leq C|k-k_2|\e^{-\frac{3}{4}t|\text{Re}\Phi|}
+C\left|k_2+\frac{2}{3}\right|\e^{-t|\text{Re}\Phi|}\leq C\left|zt^{-\frac{1}{3}}\right|\e^{-\frac{1}{6}|z|^3}.
\end{aligned}
\ee
As in the proof of Lemma \ref{lemma3.2}, this implies $\|J^{(1)}-J^r\|_{L^\infty(\Xi_2^\epsilon)}\leq Ct^{-1/3}$, $\|J^{(1)}-J^r\|_{L^1(\Xi_2^\epsilon)}\leq Ct^{-2/3}$.
For $k\in\Xi_3^\epsilon$, we have Re$\Phi=0$, and so, by \eqref{3.43},
\be
\left|\varrho(k)-\varrho\left(-\frac{2}{3}\right)\right|\leq C\left|k+\frac{2}{3}\right|\leq Ct^{-\frac{1}{3}}.
\ee
Thus, Equation \eqref{3.39} holds.
\end{proof}

\subsubsection{The final step}
Let $\breve{\Xi}=\Xi\cup\partial D_\epsilon(-2/3)$, see Figure \ref{fig6}. Define the new matrix-valued function $\breve{M}(x,t;k)$ by
\be
\breve{M}(x,t;k)=\left\{
\begin{aligned}
&M^{(1)}(x,t;k)[M^r(x,t;k)]^{-1},\quad k\in D_\epsilon\left(-\frac{2}{3}\right),\\
&M^{(1)}(x,t;k),\qquad\qquad\qquad\quad{\text elsewhere},
\end{aligned}
\right.
\ee
which satisfies the following $3\times3$ matrix RH problem:\\
$\bullet$ $\breve{M}(x,t;k)$ is analytic outside the contour $\breve{\Xi}$ with continuous boundary values on $\breve{\Xi}$;\\
$\bullet$ For $k\in\breve{\Xi}$,
$\breve{M}_+(x,t;k)=\breve{M}_-(x,t;k)\breve{J}(x,t;k);$\\
$\bullet$ $\breve{M}(x,t;k)\to\mathbb{I}_{3\times3}$, as $k\to\infty$;\\
where the jump matrix $\breve{J}$ is described by
\be
\breve{J}=\left\{
\begin{aligned}
&M^{r}_-J^{(1)}[M^{r}_+]^{-1},\quad k\in\breve{\Xi}\cap D_\epsilon\left(-\frac{2}{3}\right),\\
&[M^{r}]^{-1},\qquad\quad\quad\,\ k\in\partial D_\epsilon\left(-\frac{2}{3}\right),\\
&J^{(1)},\qquad\qquad\qquad  k\in\breve{\Xi}\setminus \overline{D_\epsilon\left(-\frac{2}{3}\right)}.
\end{aligned}
\right.
\ee
Write
\bea
\breve{\Xi}_1=\bfR\setminus[k_1,k_2],\quad
\breve{\Xi}_2=\breve{\Xi}\setminus
\left(\bfR\cup\overline{D_\epsilon\left(-\frac{2}{3}\right)}\right).\nn
\eea
\begin{lemma}
Let $\breve{w}=\breve{J}-\mathbb{I}_{3\times3}$. For $(x,t)\in\mathcal{T}_{\leq}$, we have the following estimates for each $1\leq p\leq\infty$:
\begin{align}
\|\breve{w}\|_{L^p(\partial D_\epsilon(-\frac{2}{3})}\leq&Ct^{-\frac{1}{3}},\\
\|\breve{w}\|_{L^p(\breve{\Xi}_1)}\leq&Ct^{-\frac{3}{2}},\\
\|\breve{w}\|_{L^p(\breve{\Xi}^\epsilon)}\leq& Ct^{-\frac{1}{3}(1+\frac{1}{p})},\\
\|\breve{w}\|_{L^p(\breve{\Xi}_2)}\leq&C\e^{-ct}.
\end{align}
\end{lemma}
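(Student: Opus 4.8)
The plan is to estimate $\breve{w}=\breve{J}-\mathbb{I}_{3\times3}$ piece by piece on the four subcontours of $\breve{\Xi}$, exactly paralleling the argument that produced \eqref{3.25}--\eqref{3.28} in Sector $\mathcal{T}_{\geq}$. The three branches in the definition of $\breve{J}$ supply the three types of contribution: the conjugated jump $M^r_-J^{(1)}[M^r_+]^{-1}$ on $\breve{\Xi}\cap D_\epsilon(-2/3)$, the pure model factor $[M^r]^{-1}$ on the circle $\partial D_\epsilon(-2/3)$, and the bare jump $J^{(1)}$ on the remaining contour. Throughout I would use that $|M^r(x,t;k)|\leq C$ together with the analyticity established in Lemma \ref{lemma3.4}, so that $M^r_\pm$ and their inverses are uniformly bounded on $\breve{\Xi}^\epsilon$.

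On the circle $\partial D_\epsilon(-2/3)$ one has $\breve{w}=[M^r]^{-1}-\mathbb{I}_{3\times3}$; since the circle has fixed finite length, all the $L^p$ norms are controlled by the $L^\infty$ bound \eqref{3.40}, giving $Ct^{-1/3}$. On $\breve{\Xi}_1=\bfR\setminus[k_1,k_2]$ the jump is $J^{(1)}_4$, whose off-diagonal entries carry only the remainder $\varrho_r$; because $\text{Re}\,\Phi=0$ on the real axis the exponential factors have modulus one, so $\|\breve{w}\|_{L^p(\breve{\Xi}_1)}$ is dominated by the $L^p$ norm of $\varrho_r$, which is $O(t^{-3/2})$ by Lemma \ref{lemma3.3}(iii). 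On $\breve{\Xi}^\epsilon$ I would exploit the jump relation $M^r_+=M^r_-J^r$ of the model to write $\breve{w}=M^r_-(J^{(1)}-J^r)[M^r_+]^{-1}$; the boundedness of $M^r_\pm$ then reduces the estimate to \eqref{3.39}, yielding $Ct^{-\frac{1}{3}(1+\frac{1}{p})}$. Finally, on $\breve{\Xi}_2$ the jump is again $J^{(1)}$ (of type $J^{(1)}_1$ or $J^{(1)}_2$), now evaluated on the off-axis rays at distance at least $\epsilon$ from $-2/3$, where the decay is genuinely exponential.

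The main obstacle is this last estimate on $\breve{\Xi}_2$. One must show that $|\text{Re}(\Phi(k)-\Phi(-\frac{2}{3}))|$ is bounded below by a positive constant on the rays outside $D_\epsilon(-2/3)$, uniformly as the critical points $k_1,k_2$ merge toward $-\frac{2}{3}$ at the rate $t^{-1/3}$. Since the cubic structure in \eqref{3.7} gives $\text{Re}(\Phi-\Phi(-\frac{2}{3}))\sim|k+\frac{2}{3}|^3/(64\alpha^2)$ away from the critical points, on $|k+\frac{2}{3}|\geq\epsilon$ this phase is bounded below by $c\epsilon^3$, so the jump exponential contributes $\e^{-ct}$. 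The delicate point is that the analytic piece $\varrho_a$ itself carries a growth factor $\e^{\frac{t}{4}|\text{Re}\,\Phi|}$ from Lemma \ref{lemma3.3}(ii); this must be dominated by the decay of the jump exponential, leaving a net factor $\e^{-\frac{3}{4}t|\text{Re}\,\Phi|}\leq\e^{-ct}$, exactly as in the passage to \eqref{3.20} and \eqref{3.48}. Once the sign of $\text{Re}\,\Phi$ is verified along each ray using the signature table of Figure \ref{fig2-2}, the remaining steps are the routine $L^p$ interpolation already carried out in Lemmas \ref{lemma3.2} and \ref{lemma3.4}.
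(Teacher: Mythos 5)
Your proposal is correct and follows essentially the same route the paper intends: the paper states this lemma without proof (deferring, as for its $\mathcal{T}_{\geq}$ counterpart \eqref{3.25}--\eqref{3.28}, to the proof of Lemma 4.3 in \cite{CL-JLMS}), and that argument is exactly your case-by-case decomposition of $\breve{w}$ using the identity $\breve{w}=M^r_-(J^{(1)}-J^r)[M^r_+]^{-1}$ on $\breve{\Xi}^\epsilon$, the bound \eqref{3.40} on the circle, the $O(t^{-3/2})$ control of $\varrho_r$ on $\breve{\Xi}_1$, and the uniform lower bound on $|\mathrm{Re}(\Phi(k)-\Phi(-\tfrac{2}{3}))|$ outside $D_\epsilon(-\tfrac{2}{3})$ absorbing the $\e^{\frac{t}{4}|\mathrm{Re}\,\Phi|}$ growth of $\varrho_a$ on $\breve{\Xi}_2$. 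The only cosmetic point is that on the portion of $\breve{\Xi}_1$ lying inside $D_\epsilon(-\tfrac{2}{3})$ the jump is the $M^r$-conjugated one, but since $M^r$ has no jump there and is uniformly bounded this changes nothing.
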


The remainder of the derivation in Sector $\mathcal{T}_\leq$ now proceeds as in Sector $\mathcal{T}_\geq$.

\appendix
\section{Coupled Painlev\'e II Riemann--Hilbert problem}\label{secA}

\begin{figure}[htbp]
\centering
\includegraphics[width=3.5in]{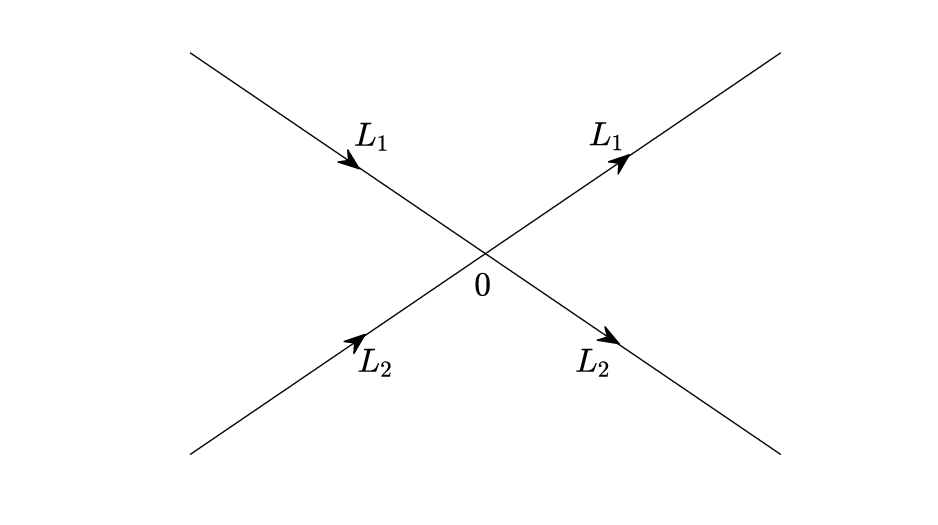}
\caption{The oriented contour $L$.}\label{figL}
\end{figure}
Define the contour $L=L_1\cup L_2$, which is oriented as in Figure \ref{figL}, where
\be\label{A.0}
\begin{aligned}
L_1=&\{z=\epsilon\e^{\frac{\pi\ii}{6}}|\epsilon\geq0\}\cup\{z=\epsilon\e^{\frac{5\pi\ii}{6}}|\epsilon\geq0\},\\
L_2=&\{z=\epsilon\e^{-\frac{\pi\ii}{6}}|\epsilon\geq0\}\cup\{z=\epsilon\e^{-\frac{5\pi\ii}{6}}|\epsilon\geq0\}.
\end{aligned}
\ee
\begin{theorem} [Coupled Painlev\'e II RH problem]\label{lemmaA.1}
Let $s_1$, $s_2$ be two real numbers and define two $3\times3$ matrices $S_1,\ S_2$ by
\be\label{A.1}
S_1=\begin{pmatrix}
1 & 0 & 0 \\
s_1 & 1 & 0 \\
s_2 & 0 & 1
\end{pmatrix},\quad
S_2=\begin{pmatrix}
1 & s_1 & s_2 \\
0 & 1 & 0 \\
0 & 0 & 1
\end{pmatrix}.
\ee
Then the following $3\times3$ matrix RH problem:\\
$\bullet$ Analyticity: $M^L(y;z)$ is analytic in $\bfC\setminus L$ with continuous boundary values on $L$;\\
$\bullet$ Jump condition: For $z\in L$, $M^L_+(y;z)=M^L_-(y;z)J^L(y;z);$\\
$\bullet$ Normalization: $M^L(y;z)\rightarrow \mathbb{I}_{3\times3}$, as $z\rightarrow\infty$;\\
where
\be\label{A.2}
J^L(y;z)=\e^{\frac{2\ii}{3}(\frac{4}{3}z^3+yz)\Sigma}S_n\e^{-\frac{2\ii}{3}(\frac{4}{3}z^3+yz)\Sigma},\quad z\in L_n,\quad n=1,2,
\ee
has a unique solution $M^L(y;z)$ for each $y\in\bfR$. Moreover, there exists smooth functions $\{M_j^L(y)\}$ of $y\in\bfR$ with decay as $y\rightarrow\infty$ such that
\be\label{A.3}
M^L(y;z)=\mathbb{I}_{3\times3}+\sum_{j=1}^N\frac{M_j^L(y)}{z^j}+O(z^{-N-1}), \quad z\rightarrow\infty.
\ee
The $(1,2)$ entry and $(1,3)$ entry of leading coefficient $M_1^L$ are given by
\be\label{A.4}
\left[M_1^L(y)\right]_{12}=\ii u_p(y),\quad \left[M_1^L(y)\right]_{13}=\ii v_p(y),
\ee
where the real-valued functions $u_p(y)$ and $v_p(y)$ satisfy the following coupled Painlev\'e II equation
\be\label{A.5}
\begin{aligned}
u_p^{''}(y)+8\left(u_p^2(y)+v_p^{2}(y)\right)u_p(y)-yu_p(y)=0,\\
v_p^{''}(y)+8\left(u_p^2(y)+v_p^{2}(y)\right)v_p(y)-yv_p(y)=0.
\end{aligned}
\ee
\end{theorem}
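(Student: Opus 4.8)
The plan is to regard \eqref{A.5} as the isomonodromy system attached to the RH problem: I would first establish unique solvability for every $y\in\bfR$ by a vanishing lemma, then extract the coefficients $M_j^L(y)$ from the large-$z$ behaviour, and finally recover the coupled Painlev\'e II system from the Lax pair that the RH problem generates.

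For uniqueness and existence, note first that $\det S_1=\det S_2=1$, so $J^L$ is unimodular; hence $\det M^L$ has no jump on $L$ and tends to $1$ at $\infty$, giving $\det M^L\equiv1$ by Liouville. Uniqueness then follows in the usual way, since the quotient of two solutions is entire and normalized. For existence I would appeal to a vanishing lemma, and this is where I expect the main difficulty. The decisive structural fact is that, because $s_1,s_2\in\bfR$, one has $S_2=S_1^\dagger$; since $\Sigma$ is real and the phase $p(z)\doteq\frac{2}{3}\big(\frac{4}{3}z^3+yz\big)$ has real coefficients for $y\in\bfR$, a direct computation gives the Schwarz symmetry $\big(J^L(z)\big)^\dagger=J^L(\bar z)$ on $L$. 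Assuming a solution $N$ of the homogeneous problem ($N\to0$, $N=O(z^{-1})$), the matrix $N(z)\,[N(\bar z)]^\dagger$ is analytic off $L$ and $O(z^{-2})$ at infinity; integrating it over a large circle and collapsing the contour onto $L$, the symmetry converts the boundary identity into a positive semidefinite quadratic form that must vanish, forcing $N\equiv0$. By Zhou's theorem this vanishing lemma yields a solution for every $y\in\bfR$. I stress that the reality of $s_1,s_2$ is indispensable here: it is exactly what guarantees global solvability in $y$, i.e.\ the absence of poles of the transcendents on the real axis.

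The asymptotic expansion \eqref{A.3} and the regularity of the coefficients are then routine. Writing the RH problem as a singular integral equation $\hat\mu=\mathbb{I}_{3\times3}+C_w\hat\mu$ with $w=J^L-\mathbb{I}_{3\times3}$, the cubic term in $p$ makes $w$ exponentially small along each ray away from the origin, so $M^L-\mathbb{I}_{3\times3}$ is represented by a Cauchy integral admitting a complete expansion in powers of $z^{-1}$; the $M_j^L(y)$ are read off by expanding $(s-z)^{-1}$. Smoothness in $y$ comes from the analytic $y$-dependence of $w$ together with uniform solvability (differentiate the integral equation in $y$), and the decay $M_j^L(y)\to0$ as $y\to+\infty$ follows from a short steepest-descent analysis of the model problem itself: for $y>0$ the phase $\frac{4}{3}z^3+yz$ has no real stationary point, the jumps are uniformly exponentially small, and hence $M^L\to\mathbb{I}_{3\times3}$.

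It remains to produce the coupled Painlev\'e II equation. Set $\Psi(y;z)\doteq M^L(y;z)\,\e^{\ii p(z)\Sigma}$; the oscillatory conjugation in \eqref{A.2} is thereby absorbed, so that $\Psi_+=\Psi_-S_n$ with the \emph{constant} matrices $S_n$ across $L$. Consequently $A\doteq\Psi_z\Psi^{-1}$ and $B\doteq\Psi_y\Psi^{-1}$ have no jump, are entire in $z$, and by \eqref{A.3} are polynomial: matching powers of $z$ gives $A=\frac{8\ii}{3}z^2\Sigma+\frac{8\ii}{3}z[M_1^L,\Sigma]+A_0(y)$ of degree two and $B=\frac{2\ii}{3}z\Sigma+\frac{2\ii}{3}[M_1^L,\Sigma]$ of degree one, with $A_0$ fixed by $M_1^L$ and $M_2^L$. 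Introducing $u_p=-\ii[M_1^L]_{12}$ and $v_p=-\ii[M_1^L]_{13}$ as in \eqref{A.4} and imposing the zero-curvature condition $A_y-B_z+[A,B]=0$, the coefficient of each power of $z$ yields an ODE for the entries of $M_1^L$; eliminating the auxiliary entries and $M_2^L$ collapses the system precisely to \eqref{A.5}. Finally, reality is forced by a second symmetry: since $\bar S_n=S_n$ and $p$ has real coefficients, uniqueness gives $M^L(y;z)=\overline{M^L(y;-\bar z)}$, whence $M_1^L$ is purely imaginary and $u_p,v_p$ are real-valued. Carrying out these symmetry reductions and the elimination of $M_2^L$ is where the bulk of the computation lies, but it is otherwise mechanical.
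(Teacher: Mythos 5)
Your proposal follows essentially the same route as the paper: unique solvability via the Schwarz symmetry $J^L(y;z)=[J^L(y;z^*)]^\dagger$ and Zhou's vanishing lemma, the large-$z$ expansion and decay of the coefficients by steepest descent, and then the Lax pair $\Psi_y=\mathcal{U}\Psi$, $\Psi_z=\mathcal{V}\Psi$ for $\Psi=M^L\e^{\ii p(z)\Sigma}$ whose zero-curvature condition, combined with the two conjugation symmetries forcing $M_1^L=-[M_1^L]^\dagger=-[M_1^L]^*$ and an integration of the diagonal equations using decay as $y\to\infty$, yields \eqref{A.5}. The argument is correct and matches the paper's proof in all essentials.
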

\begin{proof}
It first follows from the symmetry
\be\label{A.6}
J^L(y;z)=\left[J^L(y;z^*)\right]^\dag
\ee
and Zhou's vanishing lemma \cite{ZX} that the solution $M^L(y;z)$ of RH problem exists and is unique, and has an expansion of the form \eqref{A.3}. On the other hand, the coefficients $M_j^L$ and their $y$-derivatives have exponential decay as $y\rightarrow\infty$ by a A Deift--Zhou steepest descent analysis.

Now we put $\Psi(y;z)=M^L(y;z)\e^{\frac{2\ii}{3}(\frac{4}{3}z^3+yz)\Sigma}$. Then the matrix-valued function $\mathcal{U}(y;z)$ defined by
\be\label{A.7}
\mathcal{U}\doteq\Psi_y\Psi^{-1}=\left(M_y^L+\frac{2\ii}{3} zM^L\Sigma\right)\left[M^L\right]^{-1}
\ee
is an entire function of $z$. Thus, we immediately find that $\mathcal{U}(y;z)=\mathcal{U}_1(y)z+\mathcal{U}_0(y)$, and hence \eqref{A.7} becomes
\be\label{A.8}
M_y^L+\frac{2\ii}{3} zM^L\Sigma=(\mathcal{U}_1z+\mathcal{U}_0)M^L.
\ee
Substituting the expansion \eqref{A.3} into \eqref{A.8}, the result implies that
\be\label{A.9}
\mathcal{U}_1=\frac{2\ii}{3}\Sigma,\quad \mathcal{U}_0=-\frac{2\ii}{3}[\Sigma,M_1^L].
\ee
Analogously, we construct another matrix-valued function $\mathcal{V}(y;z)$ as follows:
\be\label{A.10}
\mathcal{V}\doteq\Psi_z\Psi^{-1}
=\left(M_z^L+\frac{2\ii}{3}(4z^2+y)M^L\Sigma\right)\left[M^L\right]^{-1},
\ee
which also is entire, and hence, one can get
\be\label{A.11}
\mathcal{V}=\mathcal{V}_2z^2+\mathcal{V}_1z+\mathcal{V}_0=\frac{8\ii}{3}\Sigma z^2+4\mathcal{U}_0z-\frac{8\ii}{3}[\Sigma,M_2^L]-4\mathcal{U}_0M_{1}^L+\frac{2\ii}{3}y\Sigma.
\ee
However, by substituting the Equation \eqref{A.3} into \eqref{A.8} and collecting term with $O(z^{-1})$, we find
\be\label{A.12}
M_{1y}^L-\frac{2\ii}{3}[\Sigma,M_2^L]=\mathcal{U}_0M_1^L,
\ee
which implies that
\be
\mathcal{V}_0=-4M^L_{1y}+\frac{2\ii}{3}y\Sigma.
\ee
In fact, we have amazedly shown that $\Psi$ admits a Lax pair
\be\label{A.14}
\left\{
\begin{aligned}
\Psi_y=&\mathcal{U}\Psi,\\
\Psi_z=&\mathcal{V}\Psi,
\end{aligned}
\right.
\ee
where $\mathcal{U}$ and $\mathcal{V}$ are expressed in terms of $M_1^L(y)$. Thus, the compatibility condition
\be
\mathcal{U}_z-\mathcal{V}_y+\mathcal{U}\mathcal{V}-\mathcal{V}\mathcal{U}=0
\ee
of the Lax pair \eqref{A.14} gives that
\be\label{A.16}
4M^Y_{1yy}+\frac{8\ii}{3}[\Sigma,M_1^L]M_{1y}^L-\frac{8\ii}{3} M_{1y}^L[\Sigma,M_1^L]+\frac{4}{9}y[\Sigma,M_1^L]\Sigma-\frac{4}{9}y\Sigma[\Sigma,M_1^L]=0.
\ee

Moreover, note that the jump matrix $J^L(y;z)$ obeys another symmetry
\be
J^L(y;z)=\left[J^L(y;-z^*)\right]^*,
\ee
which together with the relation \eqref{A.6} imply that $M^L(y;z)$ satisfies
\be\label{A.18}
M^L(y;z)=\left[\left(M^L\right)^\dag(y;z^*)\right]^{-1}
=\left[M^L(y;-z^*)\right]^*.
\ee
Therefore, the leading-order coefficient $M_1^L(y)$ satisfies the symmetries
\be\label{A.19}
M_1^L(y)=-\left[M_1^L(y)\right]^\dag=-\left[M_1^L(y)\right]^*.
\ee
As a consequence, we can write
\be\label{A.20}
M_1^L(y)=\begin{pmatrix}
f_1(y) & f_2(y) & f_3(y) \\
f_2(y) & f_4(y) & f_5(y) \\
f_3(y) & f_5(y) & f_6(y)
\end{pmatrix},
\ee
where $f_j(y)\in\ii\bfR$ for $j=1,\cdots,6$.

In the following, substituting \eqref{A.20} respectively into \eqref{A.16} and \eqref{A.12}, we obtain that
\be
\begin{aligned}
&f''_1-4\ii\left(f_2f'_2+f_3f'_3\right)=0,\\
&f''_2-2\ii\left(f_2f'_4+f_3f'_5-f_2f'_1\right)-yf_2=0,\\
&f''_3-2\ii\left(f_2f'_5+f_3f'_6-f_3f'_1\right)-yf_3=0,\\
&f'_4+2\ii f_2^2=0,\ f'_5+2\ii f_2f_3=0,\ f'_6+2\ii f_3^2=0.
\end{aligned}
\ee
In addition, since $M^L_1(y)$ and its derivatives decay as $y\to\infty$, we quickly find that
\be
f'_1=2\ii(f_2^2+f_3^2).
\ee
Thus, we obtain
\be\label{A.23}
\begin{aligned}
f''_2(y)-8\left(f_2^2(y)+f^2_3(y)\right)f_2(y)-yf_2(y)=0,\\
f''_3(y)-8\left(f_2^2(y)+f^2_3(y)\right)f_3(y)-yf_3(y)=0.
\end{aligned}
\ee
Then the lemma follows by setting $f_2(y)=\ii u_p(y)$ and $f_3(y)=\ii v_p(y)$.
\end{proof}
\section{A model Riemann--Hilbert problem in sector $\mathcal{T}_\leq$}\label{secb}

\begin{figure}[htbp]
\centering
\includegraphics[width=4in]{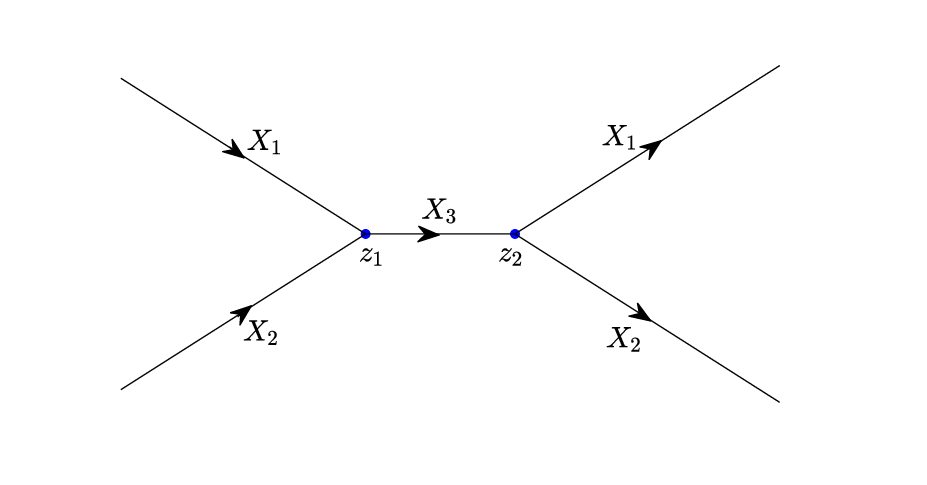}
\caption{The oriented contour $X$.}\label{fig8}
\end{figure}
Given two real numbers $z_2\geq0$ and $z_1=-z_2$, let $X$ denote the contour $X=\bigcup_{j=1}^3X_j$ shown in Figure \ref{fig8}, where the line segments
\be\label{B.1}
\begin{aligned}
X_1&=\{z=z_2+l\e^{\frac{\pi\ii}{6}}|l\geq0\}\cup\{z=z_1+l\e^{\frac{5\pi\ii}{6}}|l\geq0\},\\
X_2&=\{z=z_2+l\e^{-\frac{\pi\ii}{6}}|l\geq0\}\cup\{z=z_1+l\e^{-\frac{5\pi\ii}{6}}|l\geq0\},\\
X_3&=\{z=l|z_1\leq l\leq z_2\}.
\end{aligned}
\ee
Define a $1\times2$ row-vector $p$ by
\be\label{B.2}
p=\begin{pmatrix}s_1 & s_2\end{pmatrix},\quad s_1,s_2\in\bfR.
\ee
We consider the following family of $3\times3$ matrix RH problems parametrized by $y\leq0$, $z_2\geq0$:\\
$\bullet$ $M^X(y,z_2;z)$ is analytic in $\bfC\setminus X$ with continuous boundary values on $X$;\\
$\bullet$ For $z\in X$, $M^X_+(y,z_2;z)=M^X_-(y,z_2;z)J^X(y,z_2;z);$\\
$\bullet$ $M^X(y,z_2;z)\rightarrow \mathbb{I}_{3\times3}$ as $z\rightarrow\infty$;\\
where the jump matrix $J^X(y,z_2;z)$ is defined by
\be\label{B.3}
J^X(y,z_2;z)=\left\{
\begin{aligned}
&\begin{pmatrix}
1 & \mathbf{0}_{1\times2} \\
p^\dag\e^{2\ii(yz+\frac{4}{3}z^3)} & \mathbb{I}_{2\times2}
\end{pmatrix},\qquad\qquad\qquad\qquad\qquad\quad z\in X_1,\\
&\begin{pmatrix}
1 & p\e^{-2\ii(yz+\frac{4}{3}z^3)}\\
\mathbf{0}_{2\times1} & \mathbb{I}_{2\times2}
\end{pmatrix},\qquad\qquad\qquad\qquad\quad\qquad z\in X_2,\\
&\begin{pmatrix}
1 & p\e^{-2\ii(yz+\frac{4}{3}z^3)}  \\
\mathbf{0}_{2\times1} & \mathbb{I}_{2\times2}
\end{pmatrix}\begin{pmatrix}
1 & \mathbf{0}_{1\times2} \\
p^\dag\e^{2\ii(yz+\frac{4}{3}z^3)} & \mathbb{I}_{2\times2}
\end{pmatrix},\quad z\in X_3.
\end{aligned}
\right.
\ee
Define the parameter subset
\be\label{B.4}
\Bbb P=\{(y,z_2)\in\bfR^2|-C_1\leq y\leq0,\ \sqrt{-y}/2\leq z_2\leq C_2\},
\ee
where $C_1$, $C_2$ are two positive constants.
\begin{theorem}\label{lemmaB.1}
Let $p$ be of the form \eqref{B.2}. The RH problem for $M^X$ with jump matrix $J^X$ given by \eqref{B.2} has a unique solution $M^X(y,z_2;z)$ whenever $(y,z_2)\in\Bbb P$.
 There are smooth functions $\{M_j^X(y)\}$ such that
\be\label{B.5}
M^X(y,z_2;z)=\mathbb{I}_{3\times3}+\sum_{j=1}^N\frac{M_j^X(y)}{z^j}+O(z^{-N-1}), \quad z\rightarrow\infty.
\ee
The $(1,2)$ entry and $(1,3)$ entry of leading coefficient $M_1^X$ are given by
\be\label{B.6}
\left[M_1^X(y)\right]_{12}=\ii u_p(y),\quad \left[M_1^X(y)\right]_{13}=\ii v_p(y),
\ee
where the real-valued functions $u_p(y)$ and $v_p(y)$ are the smooth solution of the coupled Painlev\'e II equation \eqref{A.5} associated with $s_1$, $s_2$ according to Theorem \ref{lemmaA.1}. Furthermore, $M^X(y,z_2;z)$ is uniformly bounded for $z\in\bfC\setminus Z$.
\end{theorem}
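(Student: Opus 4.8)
The plan is to treat this problem as a variant of the coupled Painlev\'e~II RH problem of Theorem~\ref{lemmaA.1}, in which the single branch point at the origin has been split into the two endpoints $\pm z_2$ with an extra segment $X_3$ inserted; the essential point is that the Stokes data $s_1,s_2$ are untouched, so the same Painlev\'e~II functions must reappear. First I would record the symmetries of the jump. A direct check on $X_1,X_2,X_3$ (using that $y\in\bfR$ and that $p=\begin{pmatrix}s_1&s_2\end{pmatrix}$ has real entries) gives
\be
J^X(y,z_2;z)=\left[J^X(y,z_2;z^*)\right]^\dagger=\left[J^X(y,z_2;-z^*)\right]^*,
\ee
where on $X_3$ the product in \eqref{B.3} collapses to the Hermitian matrix $\begin{pmatrix}1+pp^\dagger & p\e^{-2\ii\theta}\\ p^\dagger \e^{2\ii\theta}& \mathbb{I}_{2\times2}\end{pmatrix}$ with $\theta=yz+\tfrac{4}{3}z^3\in\bfR$. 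The first identity is exactly the hypothesis of Zhou's vanishing lemma \cite{ZX}, so the RH problem for $M^X$ has at most one solution and, by the lemma, a solution exists; the expansion \eqref{B.5} and the smoothness of the $M_j^X$ in $y$ then follow from standard RH theory as in Theorem~\ref{lemmaA.1}.

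The heart of the argument is the reduction to an isomonodromy problem. Setting
\be
\Psi^X(y,z_2;z)=M^X(y,z_2;z)\,\e^{\frac{2\ii}{3}(\frac{4}{3} z^3+yz)\Sigma},
\ee
the computation behind \eqref{A.2} shows that the jumps of $\Psi^X$ reduce to the \emph{constant} matrices $S_1$ on $X_1$ and $S_2$ on $X_2$, while on $X_3$ the factorized jump conjugates to the constant $S_2S_1$. One checks that these three jumps meet consistently at each endpoint $\pm z_2$ (the cyclic product of the jump factors around $z_2$ is trivial, which is precisely why the factorization is written as $S_2S_1$), so $\Psi^X$ has no monodromy there and $\Psi^X_y(\Psi^X)^{-1}$, $\Psi^X_z(\Psi^X)^{-1}$ extend holomorphically across the endpoints and are entire. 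Substituting \eqref{B.5} forces them to be polynomial in $z$ with coefficients built from $M_1^X(y)$, giving the Lax pair \eqref{A.14}; its compatibility condition yields \eqref{A.16}, and the symmetries above reproduce \eqref{A.18}--\eqref{A.19}, so $M_1^X$ has the form \eqref{A.20} and its $(1,2)$, $(1,3)$ entries solve the coupled Painlev\'e~II system \eqref{A.5}, which is \eqref{B.6}.

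A point I would stress is that $M_j^X$ depends on $y$ alone and not on $z_2$: the constant jumps $S_1,S_2,S_2S_1$ are independent of $z_2$, so translating the endpoints $\pm z_2$ is a pure contour deformation (closing the lens along $X_3$), which leaves the sectionally analytic function $\Psi^X$ unchanged where it is defined. Hence varying $z_2$ is an isomonodromic deformation, the Stokes multipliers remain those of Theorem~\ref{lemmaA.1}, and by uniqueness of the solution of the Lax pair with prescribed Stokes data and normalization at $z=\infty$, the functions $u_p,v_p$ coincide with those produced by $M^L$; letting $z_2\to0$ degenerates $X$ into $L$ and makes this identification explicit.

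It remains to prove the uniform bound over $\mathbb{P}$, and this is the step I expect to be the main obstacle. The pointwise solvability from the vanishing lemma must be upgraded to a bound on $M^X$ that is uniform for $(y,z_2)\in\mathbb{P}$ and for $z\in\bfC\setminus X$; the delicate locus is the boundary $z_2=\sqrt{-y}/2$, where the endpoints $\pm z_2$ coalesce with the saddle points $\pm\sqrt{-y}/2$ of the phase $\theta$. The definition of $\mathbb{P}$ in \eqref{B.4} is tailored to this: the constraint $\sqrt{-y}/2\le z_2\le C_2$ keeps the endpoints on the correct side of the saddles, so that $\e^{\pm2\ii\theta}$ decays along $X_1,X_2$ and stays bounded on the compact segment $X_3$. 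I would therefore combine the vanishing lemma (which holds uniformly on the compact set $\mathbb{P}$ by continuity of the jump in $(y,z_2)$) with a contour-deformation estimate near the endpoints to show that the associated singular integral operator $1-C_{w^X}$ is uniformly invertible, whence $M^X$ is uniformly bounded.
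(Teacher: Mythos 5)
Your proposal is correct and follows essentially the same route as the paper, whose ``proof'' of Theorem \ref{lemmaB.1} is a one-line citation of Lenells \cite{L-SIAM} (Theorem 5.1) and Liu--Guo \cite{LG-JDE} (Lemma B.1): those references argue exactly as you do, via the Schwarz symmetry $J^X(z)=[J^X(z^*)]^\dagger$ with the positive-definite Hermitian jump $A^\dagger A$ on $X_3$ feeding Zhou's vanishing lemma, the reduction to constant jumps $S_1$, $S_2$, $S_2S_1$ with the cyclic condition at $\pm z_2$, the contour deformation identifying $M_j^X$ with $M_j^L$ (hence independence of $z_2$ and the coupled Painlev\'e II identification), and uniform resolvent estimates over the compact set $\Bbb P$ using $z_2\geq\sqrt{-y}/2$ to control the sign of $\operatorname{Re}(2\ii\theta)$ in the lens regions. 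You correctly identify the coalescence locus $z_2=\sqrt{-y}/2$ as the delicate point for uniformity, which is precisely what the definition \eqref{B.4} of $\Bbb P$ is designed to handle.
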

\begin{proof}
It follows the same line as Theorem 5.1 in \cite{L-SIAM} or Lemma B.1 in \cite{LG-JDE}.
\end{proof}

\medskip
\small{

}
\end{document}